\documentclass[12pt,a4paper]{article}
\usepackage[T1]{fontenc}
\usepackage{lmodern}
\usepackage{amsmath,amssymb,amsthm}
\usepackage{geometry}
\usepackage{hyperref}
\usepackage{enumitem}
\usepackage{booktabs}
\usepackage{natbib}
\usepackage{graphicx}
\newtheorem{theorem}{Theorem}[section]
\newtheorem{lemma}[theorem]{Lemma}
\newtheorem{proposition}[theorem]{Proposition}

\newtheorem{definition}{Definition}[section]

\newtheorem{remark}{Remark}[section]
\newtheorem{assumption}{Assumption}

\title{Feasible Sets and the Transformation of Values}
\author{Jiyuan Lyu}
\date{\today}

\begin{document}
	
	\maketitle
	
	\begin{abstract}
		This paper proposes a change in perspective on the ``transformation of values'' problem: from ``searching for a single constant solution'' to ``characterizing the allocation space under objective constraints imposed by the physical production network.'' Building an input--output model, we show mathematically that whenever the macroeconomy features a physical surplus, the set of skilled-to-simple labor reduction vectors that can sustain the subsistence floor of the entire labor force forms a bounded \emph{value-feasible set}. Within this multidimensional region, the classical ``two great macro equalities'' necessarily hold simultaneously for a reasonable range of profit rates. Hence, without violating the physical minimum conditions for reproduction, the law of value and the nominal price system can be made logically consistent.
		
		\vspace{1ex}
		\noindent\textbf{Keywords:} skilled labor reduction; transformation problem; feasible set; nonnegative matrices; input--output analysis
		
		\noindent\textbf{JEL classification:} B51, C67, D33
	\end{abstract}
	\section{Introduction}
	
	In Volume I of \emph{Capital}, Marx notes that ``more complex labour counts only as intensified, or rather multiplied simple labour,'' and that the proportions by which qualitatively different labours are reduced to a common unit ``are determined behind the backs of the producers by a social process'' \cite{marx_capital_2018}. This reduction mechanism from skilled (complex) labor to simple labor constitutes an indispensable cornerstone of the Marxian labour theory of value (LTV). Yet, throughout the development of formal and mathematical political economy, the reduction of heterogeneous labour has long been treated as a theoretical bottleneck for LTV. Without a rigorous and consistent reduction scale, aggregate social value cannot be meaningfully summed as a scalar, and the social connection between production and exchange uncovered by Marx risks becoming analytically disconnected \cite{weber_necessity_2019}.
	
	For a long time, the theoretical deadlock of skilled-labor reduction has stemmed from its persistent logical circularity. With the Sraffa--Leontief ``production of commodities by means of commodities'' system being rigorously formalized \cite{sraffa_production_2016, kurz_theory_1997, pasinetti_lectures_1977}, neo-Ricardian scholars such as Samuelson \cite{samuelson_understanding_1971} and Steedman \cite{steedman_heterogeneous_2019} launched sharp logical critiques against LTV. The core claim is: if one determines reduction coefficients by nominal wage differentials observed in labor markets, then the category of ``value'' becomes a redundant theoretical detour. This is because material technical conditions and the real wage are already sufficient to determine prices of production and the profit rate, leaving value conceptually suspended.
	
	To break this endogenous ``value--price--wage'' loop, Analytical Marxists and scholars in the classical political economy camp have undertaken substantial theoretical efforts. They have sought alternative objective indicators independent of the price system, including reductions based on education and training costs \cite{rowthorn_skilled_1974}, endogenous solutions from simultaneous equation systems \cite{bowles_marxian_1977}, isolating actual labor contents in joint production models \cite{flaschel_actual_1983}, and advanced tools from matrix algebra and eigenvalue theory \cite{krause_money_2020}. Although these approaches have become increasingly sophisticated mathematically, they share a hidden a priori limitation: they attempt to find a \textbf{unique and deterministic} reduction-coefficient vector. Yet any attempt to mechanically derive a unique benchmark solely from objective technical conditions ultimately runs into algebraic contradictions or violates nonnegativity constraints. As a result, the two great macro equalities typically cannot both hold when transforming values into prices of production, thereby generating the long-standing classical ``transformation problem'' \cite{marx_capital_1992}.
	
	This paper departs from the traditional algebraic quest for a ``unique benchmark'' and offers a paradigm shift toward a geometric perspective. Marx emphasized that reduction proportions are determined by a ``social process,'' which suggests that reduction coefficients should not be modeled mathematically as fixed constants; rather, they should be characterized as \textbf{degrees of freedom} jointly shaped by class struggle, historical conventions, and institutional arrangements. Accordingly, the central theoretical question should become: under the rigid constraints of objective technology and subsistence floors, what are the \textbf{objective boundaries and the existence space (the ``feasible set,'' in our terminology)} of reduction coefficients compatible with capitalist reproduction?
	
	Methodologically, this study follows the tradition of classical mathematical political economy. Since Roemer \cite{roemer1981analytical} laid a rigorous analytical foundation for Marxian economics using convex analysis, and Flaschel \cite{flaschel2010topics} systematically reconstructed classical macro--micro models with matrix algebra, nonnegative matrices and convex-cone theory have become core tools for analyzing the law of value. Inspired by Morishima's \cite{morishima_marxs_1973} classic theory and recent advances in general convex-cone economics \cite{yoshihara_class_2010, veneziani_analytical_2012}, this paper develops a framework based on nonnegative matrices and convex geometry, integrating the physical production network with heterogeneous reproduction requirements of labor power. The main contributions are as follows.
	
	First, we prove rigorously that as long as the economy has a positive surplus, the set of reduction coefficients that can support social reproduction constitutes a bounded closed convex set, i.e., the \textbf{value-feasible set}. This provides a benchmark for skilled-labor reduction that is immune to distortions arising from profit rates in the sphere of circulation.
	
	Second, we propose a \textbf{two-layer homomorphic mapping} mechanism that yields a new approach to the transformation problem. The traditional difficulty arises from trying to force ``the generation of values'' and ``the nominal distribution via prices'' into a single mathematical space. Using the algebraic structure of the input--output matrix, we establish the existence of a profit-rate compatibility interval such that the transformation hyperplane necessarily intersects the value-feasible set. Once the geometry of intersection is determined, the value system is projected into a price space that incorporates profit-rate compression by introducing a monetary-dimension multiplier. This approach differs fundamentally---while respecting the reproduction floor---from the ``New Interpretation'' (NI) that relaxes the physical constraint that ``workers must consume a specific bundle of physical goods'' \cite{foley_value_1982, dumenil_beyond_1983, dumenil_labor_1989}, and it is also distinct from the ``Temporal Single-System Interpretation'' (TSSI) that attempts reconciliation by introducing historical time \cite{kliman_temporal_1999, veneziani_dynamics_2005}. Our framework confirms the consistency of Marx's macro equalities within a static production network.
	
	The remainder of the paper is organized as follows. Sections 2 and 3 define the heterogeneous consumption framework and construct the basic labor-reproduction matrix. Sections 4 and 5 rigorously define the value-feasible set and present its properties and existence theorem. Section 6 introduces the profit rate and reveals the compression effect on the space of price distributions. Section 7 reformulates and solves the transformation problem using the two-layer mapping. Section 8 provides a numerical example. The final section concludes.
	\section{Theoretical Model and Basic Assumptions}
	
	\subsection{Economic structure}
	
	Consider a closed economy with $n \geq 2$ sectors. To describe input--output relations clearly, assume that sector $j$ produces only commodity $j$ and uses exclusively the concrete labor of type $j$. Joint production is not considered for now.
	
	In modern large-scale production, the production process not only consumes current intermediate inputs but also requires the advance of substantial fixed capital such as machinery and equipment. Marx's labour theory of value holds that undepreciated machinery does not transfer value to the new product; only the current depreciation transfers as ``dead labour'' into the value of output. Hence, the model must explicitly distinguish between \emph{flow} usage and \emph{stock} advances.
	
	\begin{definition}[Intermediate input and capital stock matrices]
		The physical intermediate input matrix $A=(a_{ij})$ is an $n\times n$ nonnegative matrix, where $a_{ij}$ denotes the quantity of commodity $i$ (a flow) required as intermediate input to produce one unit of commodity $j$.
		The capital stock matrix $K=(\kappa_{ij})$ is an $n\times n$ nonnegative matrix, where $\kappa_{ij}$ denotes the physical stock of commodity $i$ that must be advanced in order to produce one unit of commodity $j$.
	\end{definition}
	
	\begin{definition}[Fixed-capital depreciation matrix]
		Let $\delta_j \in (0, 1]$ be the depreciation rate of fixed capital in sector $j$, and define the diagonal depreciation matrix $\hat{\delta} = \mathrm{diag}(\delta_1,\ldots,\delta_n)$. The current-period depreciation (i.e., transfer of dead labour) matrix is $D = K\hat{\delta}$, whose element $d_{ij} = \kappa_{ij}\delta_j$ represents the wear and tear allocated to producing one unit of commodity $j$.
	\end{definition}
	
	To measure total real material usage, define the composite material input matrix $\tilde{A} = A + D$.
	
	\begin{definition}[Labor and consumption matrices]
		The labor-coefficient matrix $L=\mathrm{diag}(l_1,\ldots,l_n)$ is a diagonal matrix, where $l_j>0$ denotes the direct labor time (hours) of sector $j$ required to produce one unit of commodity $j$.
		The consumption-coefficient matrix $B=(\beta_{ij})$ is an $n\times n$ matrix, where $\beta_{ij}>0$ denotes the amount of commodity $i$ that a worker in sector $j$ must consume per hour of work to reproduce her labor power.
	\end{definition}
	
	\subsection{Technological and consumption constraints}
	
	To ensure the model has meaningful economic content, we impose the necessary technological and subsistence restrictions on the above matrices.
	
	\begin{assumption}[Technology and labor characteristics]\label{ass:A_L}
		The composite material input matrix $\tilde{A}$ is nonnegative ($\tilde{A}\geq 0$), irreducible (i.e., there exist direct or indirect production linkages across sectors), and productive in the sense that the dominant eigenvalue satisfies $\lambda_{\max}(\tilde{A})<1$ (the Hawkins--Simon condition). Moreover, each sector requires direct labor: for all $j$, $l_j>0$.
	\end{assumption}
	
	\begin{assumption}[Consumption-network properties]\label{ass:B}
		The consumption-coefficient matrix $B \geq 0$ and has no all-zero column. That is, for each $j = 1,\ldots,n$, there exists some $i$ such that $\beta_{ij} > 0$.
	\end{assumption}
	
	It should be emphasized that Assumption~\ref{ass:B} allows $B$ to be highly sparse. To reproduce labor power, workers in each sector must consume certain necessities (e.g., food and clothing), but they need not directly consume all industrial products (e.g., machine tools or pig iron). Because the production network contains indirect linkages, even if workers consume only a small set of final goods, the induced economy-wide input requirements along the full supply chain suffice to connect aggregate labor networks to the reproduction of labor power in each sector.
	
	\subsection{The boundary of simple reproduction}
	
	Under capitalist simple reproduction (zero profits), prices of production exactly cover material input costs (including intermediates and depreciation) and wage costs. Let the price vector be $\mathbf{p}=(p_1,\ldots,p_n)^T>0$ and the wage vector $\mathbf{w}=(w_1,\ldots,w_n)^T>0$ (where $w_j$ is the nominal hourly wage in sector $j$). The corresponding price system satisfies:
	\begin{equation}\label{eq:price}
		\mathbf{p}^T = \mathbf{p}^T A + \mathbf{p}^T D + \mathbf{w}^T L = \mathbf{p}^T \tilde{A} + \mathbf{w}^T L
	\end{equation}
	
	On the other hand, the material and social reproduction of labor power constitutes a hard floor of the economic circuit. Since labor complexity differs across sectors, the commodity basket required to reproduce labor power (i.e., the columns of $B$) is also heterogeneous. For the system to persist, the nominal hourly wage $w_j$ must be sufficient to purchase the minimum subsistence basket for that particular type of labor power, with expenditure $\sum_{i=1}^n p_i \beta_{ij}$. Aggregating the subsistence floors across all $n$ sectors yields the vector form of the minimum-wage constraint:
	\begin{equation}\label{eq:wage_constraint}
		\mathbf{w}^T \geq \mathbf{p}^T B
	\end{equation}
	
	\section{The Labor-Reproduction Matrix}
	
	\subsection{Construction}
	
	From the price equation \eqref{eq:price} and the productivity assumption on the composite material input matrix, $(I-\tilde{A})$ is invertible. We can thus solve for the commodity-price vector priced in terms of the wage vector:
	\begin{equation}\label{eq:price_solution}
		\mathbf{p}^T = \mathbf{w}^T L(I-\tilde{A})^{-1}
	\end{equation}
	
	Substituting this price solution into the minimum-wage constraint \eqref{eq:wage_constraint} yields $\mathbf{w}^T \geq \mathbf{w}^T L(I-\tilde{A})^{-1} B$. For subsequent analysis, we define the key operator endogenous to technology and consumption structure.
	
	\begin{definition}[Labor-reproduction matrix]\label{def:M}
		Define the basic labor-reproduction matrix as $M_0 = L(I-\tilde{A})^{-1}B$. The wage constraint sustaining the economic circuit can be compactly written as:
		\begin{equation}\label{eq:wage_M}
			\mathbf{w}^T(I-M_0) \geq \mathbf{0}^T
		\end{equation}
	\end{definition}
	
	\subsection{Dimensionlessness and price invariance}
	\begin{proposition}[Dimensionlessness of $M_0$]\label{prop:dimensionless}
		All elements of $M_0$ are dimensionless; their magnitudes purely reflect nested proportions of labor time.
	\end{proposition}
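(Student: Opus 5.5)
The plan is a dimensional analysis: assign physical units to every primitive, propagate them through the product $M_0 = L(I-\tilde{A})^{-1}B$ entry by entry, and check that they cancel. Denote the physical unit of commodity $i$ by $[q_i]$ and the unit of concrete labor of type $j$ by $[h_j]$ (hours of type-$j$ labor).

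\textbf{Units of the primitives.} From the definitions:
\begin{itemize}[label={}]
\item $a_{ij}$ and $\kappa_{ij}$ carry $[q_i]/[q_j]$.
\item $\delta_j$ is a pure rate per period, so $d_{ij}=\kappa_{ij}\delta_j$ also carries $[q_i]/[q_j]$, and hence so does $\tilde{a}_{ij}$.
\item $l_j$ carries $[h_j]/[q_j]$.
\item $\beta_{ij}$ carries $[q_i]/[h_j]$.
\end{itemize}

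\textbf{Units of the Leontief inverse.} I will show that $(I-\tilde{A})^{-1}$ has entries of units $[q_i]/[q_j]$. Since $\lambda_{\max}(\tilde{A})<1$ by Assumption~\ref{ass:A_L}, I will use the Neumann series $(I-\tilde{A})^{-1}=\sum_{k\ge 0}\tilde{A}^k$. Induction on $k$ gives
\[
(\tilde{A}^{k})_{ij}=\sum_{m}(\tilde{A}^{k-1})_{im}\tilde{a}_{mj},
\]
whose units are $([q_i]/[q_m])([q_m]/[q_j])=[q_i]/[q_j]$ in every summand. Every summand of each entry therefore has the same dimension, so the sums are dimensionally homogeneous. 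The term $I$ is consistent too: its off-diagonal zeros are harmless, and its diagonal ones carry $[q_i]/[q_i]$.

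\textbf{The product.} Since $L$ is diagonal,
\[
(M_0)_{ij}=\sum_{k} l_i\,[(I-\tilde{A})^{-1}]_{ik}\,\beta_{kj}.
\]
Each summand carries
\[
\frac{[h_i]}{[q_i]}\cdot\frac{[q_i]}{[q_k]}\cdot\frac{[q_k]}{[h_j]}=\frac{[h_i]}{[h_j]}.
\]
So $(M_0)_{ij}$ is a ratio of hours: type-$i$ labor embodied, directly and indirectly through the whole supply chain, in the subsistence basket of one hour of type-$j$ labor. This is the ``nested proportion of labor time'' in the statement.

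\textbf{Main obstacle and invariance.} The delicate point is the meaning of ``dimensionless'' when $[h_i]\neq[h_j]$ are heterogeneous labor types. I will handle it by taking time (hours) as the common physical measure of all concrete labor, so $[h_i]/[h_j]$ is hours per hour, a pure number, and no reduction coefficients are presupposed. I will then support the claim with an invariance check. Rescaling commodity units by a positive diagonal $S$ transforms $\tilde{A}\mapsto S\tilde{A}S^{-1}$, $L\mapsto LS^{-1}$ and $B\mapsto SB$. Then
\[
(I-S\tilde{A}S^{-1})^{-1}=S(I-\tilde{A})^{-1}S^{-1},
\]
so
\[
M_0\mapsto LS^{-1}\,S(I-\tilde{A})^{-1}S^{-1}\,SB=M_0.
\]
Rescaling time by a common factor cancels between $L$ and $B$. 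Finally, I will note that no price or wage enters $M_0$, which gives the price-invariance named in the subsection title.
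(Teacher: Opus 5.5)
Your proof is correct and follows the paper's route: expand $[M_0]_{ij}=l_i\sum_k[(I-\tilde{A})^{-1}]_{ik}\beta_{kj}$ and check that the units $\frac{\text{hours}}{\text{unit}_i}\cdot\frac{\text{unit}_i}{\text{unit}_k}\cdot\frac{\text{unit}_k}{\text{hours}}$ cancel. Your Neumann-series justification of the Leontief-inverse units and your explicit handling of $[h_i]$ versus $[h_j]$ are refinements the paper leaves implicit, and your rescaling check $M_0\mapsto LS^{-1}S(I-\tilde{A})^{-1}S^{-1}SB=M_0$ is exactly the argument the paper gives separately for Lemma~\ref{lem:price_invariance}.
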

	
	Expanding $[M_0]_{ij} = l_i \sum_{k} [(I-\tilde{A})^{-1}]_{ik} \beta_{kj}$, we see that the direct labor coefficient $l_i$ has units of [hours/unit of output], the Leontief inverse captures interrelations in [units of input per unit of output], and the consumption coefficient $\beta_{kj}$ corresponds to [units of consumption per hour]. Multiplying them cancels all physical units, leaving a pure number that represents: ``to reproduce one hour of labor power in sector $j$, how many hours of sector $i$'s labor must be expended, directly or indirectly, throughout the economy-wide supply-chain network.''
	
	As an objective magnitude, the labor-reproduction matrix is independent of any humanly chosen unit of account.
	
	\begin{lemma}[Price invariance of $M_0$]\label{lem:price_invariance}
		Let $\mathbf{p}>0$ be any strictly positive price vector, and let $P=\mathrm{diag}(p_1,\ldots,p_n)$. If we convert the original physical matrices into monetary matrices measured at this price vector (i.e., $\tilde{A}_{val}=P\tilde{A}P^{-1}$, $\tilde{L}=LP^{-1}$, $\tilde{B}=PB$), then the constructed matrix $M_0$ is exactly identical to the one computed in pure physical units.
	\end{lemma}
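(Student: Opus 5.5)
The plan is a direct substitution: plug the monetary matrices into the definition of $M_0$ (Definition~\ref{def:M}) and show that every factor of $P$ cancels. The only tool needed is the similarity identity for the Leontief inverse.

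First, I will rewrite $I-\tilde{A}_{val}$ as $P I P^{-1} - P\tilde{A}P^{-1} = P(I-\tilde{A})P^{-1}$. Since $P$ is diagonal with strictly positive entries, it is invertible. Also, $(I-\tilde{A})$ is invertible by Assumption~\ref{ass:A_L}, as noted before \eqref{eq:price_solution}. Hence $I-\tilde{A}_{val}$ is a product of invertible matrices, and
\begin{equation*}
(I-\tilde{A}_{val})^{-1} = P(I-\tilde{A})^{-1}P^{-1}.
\end{equation*}
An alternative route is through the Neumann series: $\tilde{A}_{val}$ is similar to $\tilde{A}$, so it has the same spectral radius, which is below $1$. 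Then $(P\tilde{A}P^{-1})^k = P\tilde{A}^kP^{-1}$ term by term. Either argument gives the identity.

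Next, I will form the monetary labor-reproduction matrix:
\begin{equation*}
\tilde{M}_0 = \tilde{L}(I-\tilde{A}_{val})^{-1}\tilde{B} = LP^{-1}\,P(I-\tilde{A})^{-1}P^{-1}\,PB = L(I-\tilde{A})^{-1}B = M_0.
\end{equation*}
The two inner products $P^{-1}P$ collapse to $I$ by associativity, and this completes the proof.

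The step needing most care is not algebraic but a matter of conventions. I will check that the stated transformations really correspond to measuring in money. Entry $(i,j)$ of $P\tilde{A}P^{-1}$ is $p_i\tilde a_{ij}/p_j$, the money value of input $i$ per unit of money value of output $j$. Entry $(i,i)$ of $LP^{-1}$ is $l_i/p_i$, hours per unit of money value of output $i$. Entry $(i,j)$ of $PB$ is $p_i\beta_{ij}$, the money spent on good $i$ per hour of type-$j$ labor. With this check the scalings are the right ones, and the cancellation shows that $M_0$ does not depend on the chosen $\mathbf{p}>0$. This is the formal counterpart of Proposition~\ref{prop:dimensionless}.
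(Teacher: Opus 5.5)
Your proposal is correct and follows essentially the same route as the paper: factor $I-\tilde{A}_{val}=P(I-\tilde{A})P^{-1}$, invert to get $(I-\tilde{A}_{val})^{-1}=P(I-\tilde{A})^{-1}P^{-1}$, then cancel the $P^{-1}P$ pairs in $\tilde{L}(I-\tilde{A}_{val})^{-1}\tilde{B}$. Your unit check of the scalings and the Neumann-series alternative are sound but not needed for the argument.
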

	
	In real-world macroeconomic statistics, it is difficult to obtain input--output tables in pure physical units; researchers often have access only to tables measured in current monetary prices. The price-invariance lemma ensures methodologically that whether commodities are measured in tons and liters or in currency units, the extracted labor-reproduction matrix $M_0$ and its algebraic properties remain invariant.
	
	\subsection{Network connectivity and spectral properties}
	
	Given the above algebraic properties, $M_0$ exhibits strong network connectivity. Since $\tilde{A}$ with fixed-capital depreciation remains nonnegative and convergent like the standard Leontief matrix, the structural features of the labor-reproduction operator are fully preserved.
	
	\begin{lemma}[Basic spectral properties of $M_0$]\label{lem:M_spectrum}
		Under Assumptions~\ref{ass:A_L} and~\ref{ass:B}, the labor-reproduction matrix satisfies $M_0 > 0$ (a strictly positive matrix) and is irreducible and aperiodic. Hence its dominant Perron--Frobenius eigenvalue $\lambda^* = \lambda_{\max}(M_0)$ is a simple positive real number, associated with unique strictly positive left and right eigenvectors. Moreover, there exists a strict spectral gap.
	\end{lemma}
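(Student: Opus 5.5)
The plan is to get strict positivity of $M_0$ entrywise, and then read off every remaining claim from Perron's theorem for positive matrices.

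First I show that the Leontief inverse $(I-\tilde{A})^{-1}$ is strictly positive. By Assumption~\ref{ass:A_L}, $\lambda_{\max}(\tilde{A})<1$, so the Neumann series converges:
\[
(I-\tilde{A})^{-1}=\sum_{k\ge 0}\tilde{A}^k .
\]
Take any pair $(i,j)$. If $i=j$, the term $I$ already makes the $(i,i)$ entry at least $1$. If $i\neq j$, irreducibility of $\tilde{A}\ge 0$ gives some $k\ge 1$ with $[\tilde{A}^k]_{ij}>0$. Every term in the series is nonnegative, so $(I-\tilde{A})^{-1}>0$. Write $N=(I-\tilde{A})^{-1}$.

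Next I compute the entries of $M_0$ directly:
\[
[M_0]_{ij}=l_i\sum_k N_{ik}\beta_{kj}.
\]
Here $l_i>0$ by Assumption~\ref{ass:A_L}, and $N_{ik}>0$ for all $k$. By Assumption~\ref{ass:B}, column $j$ of $B$ is nonnegative and has some $\beta_{k_0 j}>0$. So the sum is at least $N_{ik_0}\beta_{k_0 j}>0$, and hence $M_0>0$ entrywise.

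With $M_0>0$ in hand, the remaining claims follow quickly:
\begin{itemize}
\item $M_0$ is irreducible, since its graph is complete.
\item $M_0$ is aperiodic (primitive), since its diagonal entries are positive; equivalently, $M_0^1>0$.
\item Perron's theorem for strictly positive matrices gives that $\lambda^*=\rho(M_0)>0$ is an algebraically simple eigenvalue, with left and right eigenvectors that are strictly positive and unique up to scaling.
\item Perron's theorem also gives $|\mu|<\lambda^*$ for every other eigenvalue $\mu$. Because the spectrum is finite, $\max_{\mu\neq\lambda^*}|\mu|<\lambda^*$, which is the strict spectral gap.
\end{itemize}

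The only step that needs care is the positivity of the Leontief inverse. It is also worth checking that positivity of $M_0$ does not depend on $B$ being dense: a single positive entry per column suffices because $N$ is fully positive. Note that the definition of $B$ states $\beta_{ij}>0$, but we only use the weaker Assumption~\ref{ass:B}. The positivity of $N$ needs irreducibility of $\tilde{A}$ for $n\ge 2$, and the diagonal of $N$ is handled by the identity term. Everything else is a direct citation of the Perron--Frobenius theorem.
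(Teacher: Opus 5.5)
Your proposal is correct and follows the paper's route: positivity of $(I-\tilde{A})^{-1}$ via the Neumann series and irreducibility, then $M_0>0$ from a positive entry in each column of $B$, then irreducibility, aperiodicity, simplicity, positive eigenvectors and the spectral gap from Perron's theorem for positive matrices. You add one small detail the paper leaves implicit: finiteness of the spectrum turns $|\mu|<\lambda^*$ for each other eigenvalue into a uniform gap.
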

	
	The strict positivity of $M_0$ (i.e., no element is zero) captures the indivisibility of the modern division of labor. As discussed above, even if coal miners do not directly consume textiles, the grain they buy requires fertilizer; fertilizer production requires chemical equipment; and manufacturing that equipment uses textile-sector products. Such indirect production chains are infinitely linked through $(I-\tilde{A})^{-1}$. Objectively, reproducing labor power in any sector indirectly absorbs the labor of all other sectors in the economy. This all-round interdependence establishes the algebraic status of $M_0$ as a strictly positive matrix and paves the way for identifying a benchmark solution (the Perron eigenvector) in what follows.
	\section{Skilled-Labor Reduction}
	
	After constructing the basic labor-reproduction matrix $M_0$---which incorporates intermediate-input consumption and fixed-capital depreciation---we can formally translate Marx's hypothesis of skilled-labor (complex-labor) reduction into a rigorous algebraic constraint. The analysis below shows that, given objective technology and subsistence-floor conditions, society cannot arbitrarily choose reduction coefficients; instead, they are tightly confined to a specific region.
	
	\subsection{Reduction coefficients}
	
	To measure heterogeneous labor in a common unit, let $\mathbf{c}=(c_1,c_2,\ldots,c_n)^T>0$ be an exogenous vector of reduction coefficients. Without loss of generality, we take sector 1 labor as the benchmark simple labor and set $c_1=1$. This means that when a worker in sector $j$ works for one hour, she creates new value equivalent to $c_j$ hours of simple labor.
	
	Under the labour theory of value, the socially necessary labor time embodied in commodities (i.e., labor values $\mathbf{v}$) consists of two components: (i) old value transferred from means of production, and (ii) new value created by living labor. With fixed capital, the transfer of old value includes not only current intermediate inputs ($A$) but also current depreciation of equipment ($D$). Hence the value system of the whole economy is determined by the following system:
	\begin{equation}
		\mathbf{v}^T = \mathbf{v}^T A + \mathbf{v}^T D + \mathbf{c}^T L = \mathbf{v}^T \tilde{A} + \mathbf{c}^T L
	\end{equation}
	Solving yields a unique value system:
	\begin{equation}
		\mathbf{v}^T = \mathbf{c}^T L(I-\tilde{A})^{-1}
	\end{equation}
	
	We next consider a central concept in Marxian economics: the origin of surplus value. The value of labor power per hour for workers in sector $j$ equals the total value of the commodity bundle that must be consumed to reproduce that labor power. Using the previously defined matrix $M_0$, this can be written compactly as:
	\begin{equation}
		\sigma_j = \mathbf{v}^T \mathbf{b}_j = \mathbf{c}^T L(I-\tilde{A})^{-1}\mathbf{b}_j = [\mathbf{c}^T M_0]_j
	\end{equation}
	where $\mathbf{b}_j$ is the $j$-th column of the heterogeneous consumption matrix $B$.
	
	According to exploitation theory, surplus value arises from the gap between new value created ($c_j$) and the value of labor power ($\sigma_j$). We therefore define the objective exploitation rate of sector $j$ in the pure value space as:
	\begin{equation}\label{eq:exploitation_rate}
		e_j(\mathbf{c}) = \frac{c_j - [\mathbf{c}^T M_0]_j}{[\mathbf{c}^T M_0]_j}
	\end{equation}
	
	To prevent the macroeconomic network from breaking down, the most basic physical reproduction floor requires that in every sector the new value created by workers must at least compensate the value of their own labor-power consumption. That is, for all $j$, the nonnegative-exploitation condition must hold: $c_j \geq [\mathbf{c}^T M_0]_j$. In matrix form, this condition is $\mathbf{c}^T(I-M_0)\geq\mathbf{0}^T$.
	
	\subsection{Constructing the value-feasible domain}
	
	Based on the macro-level physical reproduction constraints above, we now formally delimit the region in which reduction coefficients may reside.
	
	\begin{definition}[Value Feasible Domain]\label{def:Theta_val}
		Given composite technological conditions and the subsistence consumption structure $(\tilde{A},L,B)$ (including fixed-capital consumption), the set of reduction coefficients that satisfy the physical reproduction floor for the entire labor force is defined as:
		\begin{equation}
			\Theta^{val} = \left\{\mathbf{c}\in\mathbb{R}^n_{++} :
			c_1=1,\; \mathbf{c}^T(I-M_0)\geq\mathbf{0}^T\right\}
		\end{equation}
	\end{definition}
	
	$\Theta^{val}$ constitutes the rigid boundary within which a social system can choose when converting complex labor into simple labor. Outside this boundary, there exists at least one sector in which workers create value insufficient even to cover the value of the consumption goods necessary for their own subsistence. In that case, labor power in that sector cannot be reproduced, and the macro supply-chain network collapses.
	
	\begin{remark}[Strict feasibility in the interior]\label{rem:strict_interior}
		In a real capitalist economy, capitalists in all sectors necessarily demand strictly positive surplus value (i.e., $c_j > [\mathbf{c}^T M_0]_j$ for all $j$). In that case, the reduction coefficients must lie in the interior of the value-feasible domain (the strictly feasible region $\Theta^\circ = \{\mathbf{c} : c_1=1, \mathbf{c}^T(I-M_0)>\mathbf{0}^T\}$). However, in our subsequent mathematical analysis, in order to work with a closed set (supporting compactness-like arguments for extrema and intersections), we consistently conduct the discussion on the closed set with boundary, $\Theta^{val}$.
	\end{remark}
	
	The feasible-domain perspective implies that, as long as the objective physical network generates a positive surplus, the conversion vector $\mathbf{c}$ determined by the ``social process'' possesses substantial degrees of freedom.
	\section{Properties of the Value-Feasible Domain and a Benchmark Solution}
	
	After defining the value-feasible domain, a natural question arises: under what conditions is this set nonempty? If it exists, what geometric properties does it have? This section shows that the existence of the domain is closely tied to algebraic features of the objective physical network, and that its interior contains a benchmark with special economic meaning.
	
	\subsection{Spectral equivalence for the existence theorem}
	
	The ``Fundamental Marxian Theorem'' (FMT) establishes that a positive exploitation rate is necessary and sufficient for the existence of positive profits. In our heterogeneous-consumption framework with fixed-capital consumption, this celebrated theorem can be extended and sharpened into an eigenvalue-equivalence proposition.
	
	\begin{theorem}[Existence theorem for reduction coefficients]\label{thm:existence}
		Under Assumptions~\ref{ass:A_L} and~\ref{ass:B}, the following three conditions concerning the economy's capacity to generate surplus value are equivalent:
		\begin{enumerate}[label=(\roman*)]
			\item The value-feasible domain has a nonempty interior (i.e., there exists at least one vector of reduction coefficients that yields strictly positive exploitation in every sector, $\operatorname{int}(\Theta^{val})\neq\emptyset$);
			\item The dominant eigenvalue of the labor-reproduction matrix is strictly less than one ($\lambda_{\max}(M_0)< 1$);
			\item The extended physical matrix that includes the class-based material advance satisfies productivity, i.e., the economy features a macro-level physical surplus ($\lambda_{\max}(\tilde{A}+BL)< 1$).
		\end{enumerate}
		\textit{(On the zero-surplus boundary, replacing the strict inequalities above by ``$\leq 1$'' corresponds to the nonemptiness of the closed set $\Theta^{val}\neq\emptyset$. For a matrix-theoretic proof, see the Appendix.)}
	\end{theorem}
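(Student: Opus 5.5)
The plan is to prove (i)$\Leftrightarrow$(ii) with the Perron--Frobenius facts already recorded in Lemma~\ref{lem:M_spectrum}, and (ii)$\Leftrightarrow$(iii) with a similarity-type identity linking $M_0=L(I-\tilde A)^{-1}B$ to $\tilde A+BL$. Throughout, I read ``interior'' as the strictly feasible region $\Theta^\circ$ of Remark~\ref{rem:strict_interior}. Because $\Theta^{val}$ has the constraint $c_1=1$, it is only relatively open within an affine hyperplane. The constraint $c_1=1$ is a harmless normalization: every condition is positively homogeneous in $\mathbf{c}$, so any $\mathbf{c}>0$ can be rescaled.

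For (ii)$\Rightarrow$(i), let $\mathbf{y}^*>0$ be the left Perron eigenvector of $M_0$, so that $\mathbf{y}^{*T}M_0=\lambda^*\mathbf{y}^{*T}$. Set $\mathbf{c}=\mathbf{y}^*/y^*_1$. Then $\mathbf{c}^T(I-M_0)=(1-\lambda^*)\mathbf{c}^T>\mathbf{0}^T$, so $\mathbf{c}\in\Theta^\circ$. This also yields the ``benchmark solution'' the section advertises.

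For (i)$\Rightarrow$(ii), take $\mathbf{c}>0$ with $\mathbf{c}^TM_0<\mathbf{c}^T$ componentwise. Pair this with the right Perron vector $\mathbf{x}^*>0$. This gives $\lambda^*\mathbf{c}^T\mathbf{x}^*=\mathbf{c}^TM_0\mathbf{x}^*<\mathbf{c}^T\mathbf{x}^*$, and hence $\lambda^*<1$. An equivalent route is the subinvariance theorem, or the Collatz--Wielandt bound $\lambda^*\le\max_j[\mathbf{c}^TM_0]_j/c_j<1$.

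The boundary version uses the same arguments with weak inequalities. If $\lambda^*\le1$, the Perron vector gives $\mathbf{c}^T(I-M_0)\ge\mathbf{0}^T$. Conversely, if some $\mathbf{c}>0$ satisfies $\mathbf{c}^TM_0\le\mathbf{c}^T$, then $\lambda^*\le1$.

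For (ii)$\Leftrightarrow$(iii), I will work with the Leontief inverse $N=(I-\tilde A)^{-1}$, which is defined because $\lambda_{\max}(\tilde A)<1$ by Assumption~\ref{ass:A_L}. The key facts are these:
\begin{itemize}
\item $\lambda_{\max}(\tilde A+BL)<1$ holds if and only if $\lambda_{\max}\big((I-\tilde A)^{-1}BL\big)<1$. This is the standard splitting lemma for $M$-matrices: if $T=P+Q$ with $P,Q\ge0$ and $\rho(P)<1$, then $\rho(T)<1\iff\rho((I-P)^{-1}Q)<1$.
\item One direction of the splitting lemma follows from $I-T=(I-P)\big(I-(I-P)^{-1}Q\big)$, since a nonnegative inverse of a product of $M$-matrices is nonnegative.
\item The other direction follows from the Neumann series, or equivalently from the regular-splitting characterization of nonsingular $M$-matrices.
\item Since $XY$ and $YX$ share their nonzero eigenvalues, $NBL$ and $LNB=M_0$ have the same spectral radius.
\end{itemize}
Chaining these gives (ii)$\Leftrightarrow$(iii), and the same chain with $\le$ gives the boundary statement.

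The step I expect to be the main obstacle is carrying the splitting lemma over to the weak-inequality (boundary) case. The clean approach is to use the formula $\rho(T)\lessgtr1\iff\rho((I-P)^{-1}Q)\lessgtr1$, which holds for all three relations ($<,=,>$) by a Perron-vector argument. If $T\mathbf{x}=\rho\mathbf{x}$ with $\mathbf{x}\ge0$ nonzero, then $(\rho I-P)\mathbf{x}=Q\mathbf{x}$. I will compare $\rho$ with $1$ using the nonnegativity of $(I-P)^{-1}$, and apply irreducibility of $\tilde A+BL$, which is inherited from $\tilde A$, to obtain positive eigenvectors.
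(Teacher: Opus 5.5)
Your (i)$\Leftrightarrow$(ii) argument matches the paper's. The paper uses the left Perron vector for sufficiency, and for necessity it pairs with the right Perron vector $\mathbf{x}^*>\mathbf{0}$, split by contraposition into the cases $\lambda^*=1$ and $\lambda^*>1$. Your reading of ``interior'' as $\Theta^\circ$ is also the one the paper's parenthetical intends.

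For (ii)$\Leftrightarrow$(iii) you take a genuinely different route. The paper manipulates subinvariant vectors. From $\mathbf{u}>\mathbf{0}$ with $(\tilde A+BL)\mathbf{u}<\mathbf{u}$ it gets $(I-\tilde A)\mathbf{u}>BL\mathbf{u}$, multiplies by $N=(I-\tilde A)^{-1}\ge 0$, and substitutes $\mathbf{h}=L\mathbf{u}$ to reach $M_0\mathbf{h}<\mathbf{h}$. That substitution is just the similarity $M_0=L(NBL)L^{-1}$, which you replace by the $XY$/$YX$ fact. The paper then calls the steps ``reversible''.

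That reversal is fine for (iii)$\Rightarrow$(ii) but not for the converse. Going backwards requires multiplying $\mathbf{u}>NBL\mathbf{u}$ by $I-\tilde A$, whose off-diagonal entries are negative, so componentwise inequalities are not preserved and the same $\mathbf{u}$ need not work. One needs a new test vector, such as $\mathbf{u}=(I-NBL)^{-1}N\mathbf{e}$ with $\mathbf{e}>\mathbf{0}$, for which $(I-\tilde A-BL)\mathbf{u}=\mathbf{e}$.

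Your factorization $I-T=(I-P)\bigl(I-(I-P)^{-1}Q\bigr)$ is exactly what supplies that missing direction. Your Perron-vector trichotomy, using irreducibility of $\tilde A+BL$ and strict positivity of $NBL$, also proves the ``$\le 1$'' version of (ii)$\Leftrightarrow$(iii), which the paper asserts without proof.

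The cost is that you depend on $M$-matrix facts, so state the decisive one precisely rather than as ``a nonnegative inverse of a product of $M$-matrices is nonnegative''. The two factors' nonnegative inverses multiply to $(I-T)^{-1}\ge 0$. For $T\ge 0$ this forces $\rho(T)<1$: a Perron vector $\mathbf{x}\ge\mathbf{0}$ of $T$ satisfies $\mathbf{x}=(1-\rho(T))(I-T)^{-1}\mathbf{x}$, which is impossible unless $1-\rho(T)>0$.
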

	
	The existence theorem links the invisible value-exploitation relation (Condition i) to directly measurable features of the physical production network (Condition iii) through matrix algebra (Condition ii). It implies that if a society can still produce a physical surplus after replacing intermediate inputs, equipment depreciation, and workers' subsistence goods, then regardless of how complex social conflicts generate large cross-sector differences in reduction coefficients, there must exist an entire region of reduction-coefficient vectors that allow capitalist value valorization.
	
	\subsection{Boundedness, closedness, and convexity of the feasible region}
	
	Once nonemptiness is established, we must further examine how far the solution set extends in multidimensional space. Does ``free choice'' of reduction coefficients mean they can grow without bound?
	
	\begin{proposition}[Bounded closed convexity of $\Theta^{val}$]\label{prop:convex}
		When the economy has a surplus ($\lambda_{\max}(M_0) < 1$), and sector 1 is chosen as the benchmark ($c_1=1$), the value-feasible domain $\Theta^{val}$ forms, in the remaining $n-1$ dimensions, a \textbf{bounded closed convex set with a strictly positive lower bound}.
		\textit{(A strict algebraic proof deriving upper and lower bounds in each dimension via diagonal elements is provided in the Appendix.)}
	\end{proposition}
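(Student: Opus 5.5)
Closedness and convexity come straight from the structure of $\Theta^{val}$. The conditions $c_1=1$ and $\mathbf{c}^T(I-M_0)\geq\mathbf{0}^T$ are finitely many non-strict linear inequalities and one linear equality, so they cut out a closed convex polyhedron. The only subtlety is the openness requirement $\mathbf{c}\in\mathbb{R}^n_{++}$. I will remove it by showing that any $\mathbf{c}\geq 0$ with $c_1=1$ satisfying the inequalities is automatically strictly positive. Then $\Theta^{val}$ coincides with the closed polyhedron $\{\mathbf{c}\geq 0: c_1=1,\ \mathbf{c}^T(I-M_0)\geq \mathbf{0}^T\}$.

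For the lower bound I use Lemma~\ref{lem:M_spectrum}, which gives $M_0>0$. For each $j$, feasibility gives
\[
c_j \geq [\mathbf{c}^T M_0]_j=\sum_i c_i m_{ij} \geq c_1 m_{1j} = m_{1j}>0 .
\]
So every coordinate is bounded below by the explicit constant $m_{1j}$, uniformly over the set. This also settles the positivity question above.

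For the upper bound I will use the diagonal elements, as the statement suggests. Fix $j\geq 2$. Dropping all terms except $i=1$ and $i=j$ in $c_j\geq \sum_i c_i m_{ij}$ gives $c_j(1-m_{jj})\geq m_{1j}$. This only yields a lower bound, so the upper bound must come from a different column. The natural column is column $1$: $1=c_1\geq \sum_i c_i m_{i1}\geq c_j m_{j1}$, hence $c_j\leq 1/m_{j1}<\infty$ because $m_{j1}>0$. This already gives boundedness without the surplus hypothesis.

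To get a sharper bound I would also keep the diagonal term. Column $1$ gives $1-m_{11}\geq c_j m_{j1}$. Here $m_{11}<1$ follows from $\lambda_{\max}(M_0)<1$, since a diagonal entry of a positive matrix is at most its Perron root. This yields $c_j\leq (1-m_{11})/m_{j1}$, which is the kind of diagonal-element bound the paper presumably intends.

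Combining the two estimates, $\Theta^{val}$ lies in the box $\prod_{j\geq2}[m_{1j},(1-m_{11})/m_{j1}]$, so it is bounded. Closed and bounded in $\mathbb{R}^{n-1}$ means compact. Nonemptiness under $\lambda_{\max}(M_0)<1$ follows from Theorem~\ref{thm:existence}, or directly from the normalized left Perron vector $\mathbf{c}^*$. That vector satisfies $\mathbf{c}^{*T}(I-M_0)=(1-\lambda^*)\mathbf{c}^{*T}>0$, so it even lies in the interior.

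The main obstacle is choosing the right inequality for the upper bound: the sector-$j$ constraint itself only bounds $c_j$ from below. The point is to read the benchmark sector's constraint as a budget that caps every other coordinate. I expect this to be the only nonroutine step. I also need to check that the lower-bound constants are consistent with the upper ones, but nonemptiness already guarantees this.
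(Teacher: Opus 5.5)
Your proof is correct and follows the paper's route: closed convexity comes from the half-space description, the upper bound $c_j\le(1-m_{11})/m_{j1}$ comes from the column-1 constraint, and the lower bounds come from the column-$j$ constraints. The upper bound is exactly the paper's, with $m_{11}<1$ being its Lemma A.1. The only differences are minor and in your favor: you settle for the cruder lower bound $c_j\ge m_{1j}$ instead of $m_{1j}/(1-m_{jj})$, and you explicitly show that the open constraint $\mathbf{c}\in\mathbb{R}^n_{++}$ is redundant, a point the paper's closedness argument silently skips.
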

	
	Because the feasible domain has a well-defined upper bound in every dimension, no type of labor (e.g., complex labor in high-tech sectors) can demand an infinitely large value-conversion coefficient under a given technological network. At the same time, the strictly positive lower bound protects simple labor: even if social institutions strongly favor mental labor, the reduction coefficient of other labor types cannot be driven down toward zero. The entire social evaluation system of labor is thus strictly confined within boundaries determined by physical requirements (matrices $\tilde{A}$ and $B$). For later exposition, we refer to choosing sector 1 as the benchmark as normalization.
	
	\subsection{Eigenvectors}
	
	Having established that the feasible domain is a bounded full-dimensional polytope, we seek a meaningful reference point within it. Since the labor-reproduction matrix $M_0$ is strictly positive, by the Perron--Frobenius theorem its dominant eigenvalue $\lambda_{\max}(M_0)$ corresponds to a strictly positive left eigenvector $\mathbf{y}^*$.
	
	\begin{theorem}[Equal-exploitation point]\label{thm:equilibrium}
		Let $\lambda^*=\lambda_{\max}(M_0)<1$, and let $\mathbf{y}^*$ be the normalized ($y_1^*=1$) positive eigenvector of $M_0$. Then $\mathbf{y}^*$ lies in the interior of the value-feasible domain ($\mathbf{y}^* \in \operatorname{int}(\Theta^{val})$). Moreover, $\mathbf{y}^*$ is the \textbf{unique} reduction-coefficient vector in the solution set that equalizes objective exploitation rates across all sectors. In that case, there exists a common economy-wide exploitation rate $e^*=(1-\lambda^*)/\lambda^*>0$.
		\textit{(For the eigenvalue mapping and the algebraic proof of uniqueness, see the Appendix.)}
	\end{theorem}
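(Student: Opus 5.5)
The plan is to use Lemma~\ref{lem:M_spectrum} directly. Since $M_0>0$, the Perron--Frobenius theorem gives a left eigenvector $\mathbf{y}^*>0$ with $\mathbf{y}^{*T}M_0=\lambda^*\mathbf{y}^{*T}$, unique up to positive scaling. Because $y_1^*>0$, we may rescale so that $y_1^*=1$.

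Feasibility and interiority follow at once. We have $\mathbf{y}^{*T}(I-M_0)=(1-\lambda^*)\mathbf{y}^{*T}$. Since $\lambda^*<1$ and $\mathbf{y}^*>0$, every component of this vector is strictly positive, so $\mathbf{y}^*\in\Theta^\circ$. The constraint map $\mathbf{c}\mapsto\mathbf{c}^T(I-M_0)$ is continuous, so strict inequalities persist on a neighbourhood of $\mathbf{y}^*$ inside the affine slice $\{c_1=1\}$; positivity $\mathbf{c}>0$ also persists there. Hence $\mathbf{y}^*$ lies in the relative interior of $\Theta^{val}$, which is the sense of ``interior'' used in Proposition~\ref{prop:convex}. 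For the common rate, note $[\mathbf{y}^{*T}M_0]_j=\lambda^*y_j^*$. Substituting into \eqref{eq:exploitation_rate} gives $e_j=(y_j^*-\lambda^* y_j^*)/(\lambda^* y_j^*)=(1-\lambda^*)/\lambda^*$ for every $j$. This is positive because $0<\lambda^*<1$, and $\lambda^*>0$ holds since $M_0>0$.

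Uniqueness is the only substantive step. Suppose $\mathbf{c}\in\Theta^{val}$ equalizes exploitation rates, $e_j(\mathbf{c})=e$ for all $j$. Because $\mathbf{c}>0$ and $M_0>0$, each $[\mathbf{c}^TM_0]_j>0$, so $e$ is well defined. The condition $c_j=(1+e)[\mathbf{c}^TM_0]_j$ for all $j$ rewrites as $\mathbf{c}^TM_0=\mu\,\mathbf{c}^T$ with $\mu=1/(1+e)$. Here $1+e>0$, because $c_j>0$ and $[\mathbf{c}^TM_0]_j>0$. Thus $\mathbf{c}$ is a strictly positive left eigenvector of $M_0$. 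The standard Perron--Frobenius fact is that a nonnegative eigenvector of an irreducible nonnegative matrix must belong to the Perron root. To prove it here, I would pair with the positive right Perron eigenvector $\mathbf{x}^*$: $\mu\,\mathbf{c}^T\mathbf{x}^*=\mathbf{c}^TM_0\mathbf{x}^*=\lambda^*\mathbf{c}^T\mathbf{x}^*$. Since $\mathbf{c}^T\mathbf{x}^*>0$, this forces $\mu=\lambda^*$. Simplicity of $\lambda^*$ (Lemma~\ref{lem:M_spectrum}) then makes $\mathbf{c}$ a positive multiple of $\mathbf{y}^*$, and the normalization $c_1=1$ gives $\mathbf{c}=\mathbf{y}^*$.

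The main care point is ruling out other eigenvalues. A priori, an equalizing $\mathbf{c}$ could correspond to some $\mu\neq\lambda^*$, and the positivity of $\mathbf{c}$ is exactly what the pairing argument with $\mathbf{x}^*$ uses to exclude this. A secondary point is being explicit that ``interior'' means interior relative to the hyperplane $c_1=1$, since in $\mathbb{R}^n$ that hyperplane has empty interior.
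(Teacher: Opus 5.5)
Your proposal is correct and follows the paper's proof essentially step for step. You use $\mathbf{y}^{*T}(I-M_0)=(1-\lambda^*)\mathbf{y}^{*T}>\mathbf{0}^T$ for interiority and direct substitution for $e^*=(1-\lambda^*)/\lambda^*$; for uniqueness you rewrite equal exploitation as $\mathbf{c}^TM_0=\frac{1}{1+e}\mathbf{c}^T$ and invoke Perron--Frobenius. Your pairing with the right eigenvector $\mathbf{x}^*$ just proves the fact the paper cites from ``matrix theory'', and your note that ``interior'' means the interior relative to the slice $c_1=1$ makes explicit a point the paper leaves implicit.
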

	
	This ``equal-exploitation point'' $\mathbf{y}^*$ need not be the terminal state toward which capitalist markets spontaneously converge. In reality, due to differences in the organic composition of capital, bargaining power, and the presence of monopolistic forces, actual exploitation rates across sectors must be uneven.
	
	Nevertheless, the existence of $\mathbf{y}^*$ provides an analytical benchmark. On the one hand, it is entirely endogenous to the objective physical production network and excludes institutional frictions; on the other hand, it represents an ideal state in which surplus extraction is most ``uniform'' across sectors.
	
	\subsection{The zero-surplus boundary}
	
	When technology is inefficient or subsistence needs are sufficiently high so that the macroeconomy can exactly cover all costs but cannot accumulate any surplus (i.e., $\lambda_{\max}(M_0)=1$), the feasible domain defined above shrinks dramatically. We characterize this case as follows.
	
	\begin{proposition}[Degeneration of the feasible domain]\label{prop:collapse}
		When $\lambda_{\max}(M_0)=1$:
		\begin{enumerate}[label=(\roman*)]
			\item The interior of the value-feasible domain becomes empty; that is, no strictly positive exploitation is possible;
			\item The closed set $\Theta^{val}$ degenerates into a unique singleton set $\Theta^{val} = \{\mathbf{y}^*\}$, where $\mathbf{y}^*$ is the unique normalized ($y_1^*=1$) positive eigenvector of $M_0$;
			\item Under this unique reduction vector, the realized exploitation rate in every sector is zero ($e_j = 0, \forall j$).
		\end{enumerate}
		\textit{(An algebraic proof using eigenvector orthogonality is provided in the Appendix.)}
	\end{proposition}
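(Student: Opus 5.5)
The plan is to exploit the strict positivity of $M_0$ (Lemma~\ref{lem:M_spectrum}) together with the left Perron eigenvector $\mathbf{y}^*>0$, $\mathbf{y}^{*T}M_0=\mathbf{y}^{*T}$, and the strictly positive right eigenvector $\mathbf{x}^*>0$, $M_0\mathbf{x}^*=\mathbf{x}^*$. Both exist and are unique up to scaling because $\lambda_{\max}(M_0)=1$ is simple.

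First I will show that $\mathbf{y}^*\in\Theta^{val}$. This is immediate: $\mathbf{y}^{*T}(I-M_0)=\mathbf{0}^T$, $\mathbf{y}^*>0$, and after normalization $y^*_1=1$.

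Next comes the key orthogonality step. Take any $\mathbf{c}\in\Theta^{val}$ and set $\mathbf{s}^T=\mathbf{c}^T(I-M_0)\geq\mathbf{0}^T$. Multiplying on the right by $\mathbf{x}^*$ gives
\[
\mathbf{s}^T\mathbf{x}^*=\mathbf{c}^T(I-M_0)\mathbf{x}^*=0 .
\]
Since $\mathbf{x}^*>0$ and $\mathbf{s}\geq 0$, every $s_j$ must vanish. Hence $\mathbf{c}^T M_0=\mathbf{c}^T$, so $\mathbf{c}$ is a positive left eigenvector for the eigenvalue $1$. By simplicity of the Perron root, $\mathbf{c}$ is a scalar multiple of $\mathbf{y}^*$, and the normalization $c_1=1=y_1^*$ forces $\mathbf{c}=\mathbf{y}^*$. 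This proves (ii).

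Part (i) follows from the same computation. Any $\mathbf{c}$ with strictly positive exploitation would have $\mathbf{s}>0$ and hence $\mathbf{s}^T\mathbf{x}^*>0$, which is a contradiction. So $\Theta^\circ=\emptyset$. Alternatively, one can invoke Theorem~\ref{thm:existence}, (i)$\Leftrightarrow$(ii), since $\lambda_{\max}(M_0)<1$ fails. Topologically, a singleton in the $(n-1)$-dimensional slice $c_1=1$ has empty interior when $n\geq 2$.

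For (iii), at $\mathbf{c}=\mathbf{y}^*$ we have $[\mathbf{c}^T M_0]_j=y^*_j>0$, so $e_j=(y_j^*-y_j^*)/y_j^*=0$ for every $j$. This is also consistent with $e^*=(1-\lambda^*)/\lambda^*$ at $\lambda^*=1$.

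The only delicate point is justifying that $1$ is a simple eigenvalue with a strictly positive right eigenvector. Both facts follow directly from Perron--Frobenius applied to the strictly positive $M_0$, so I expect no real obstacle beyond stating this carefully. I also need to note that "interior" is understood relative to the affine slice $c_1=1$.
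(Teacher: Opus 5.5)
Your proof is correct and follows essentially the same route as the paper: pair $\mathbf{c}^T(I-M_0)\geq\mathbf{0}^T$ with the strictly positive right Perron eigenvector $\mathbf{x}^*$ to force $\mathbf{c}^T M_0=\mathbf{c}^T$, use uniqueness of the normalized left eigenvector to get $\Theta^{val}=\{\mathbf{y}^*\}$, and substitute into $e_j$. Beyond the paper, you explicitly check that $\mathbf{y}^*\in\Theta^{val}$ and derive (i) from the same inner-product computation; the paper's appendix proof leaves both implicit, with (i) effectively covered by the necessity part of Theorem~\ref{thm:existence}.
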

	
	At the break-even line of material scarcity, objective technological constraints completely lock in what would otherwise be the bargaining space afforded to the ``social process.'' Society then loses any freedom to choose reduction coefficients: the only admissible distributive scheme is uniquely determined by the underlying physical input requirements.
	
	\section{How the Profit Rate Constrains the Distributive Space}
	
	The previous analysis was built on objective technology under the assumption of zero profits, thereby constructing the foundation for value formation---the value-feasible domain $\Theta^{val}$. In an actual capitalist economy, however, production is not only a process of creating material goods but also a process of realizing and distributing value. Capitalists in each sector claim a uniform profit rate $r$ on the basis of the total fixed-capital stock they advance ($K$). This section shows how capital's appropriation of a profit rate compresses, at the algebraic level, the economy-wide set of feasible nominal wage structures.
	
	\subsection{A price system with stock-based profits}
	
	When there is an economy-wide uniform profit rate $r \geq 0$, the equation for prices of production becomes the sum of costs and stock-based profits:
	\begin{equation}\label{eq:price_profit}
		\mathbf{p}^T = \underbrace{\mathbf{p}^T \tilde{A}}_{\text{composite material costs}} + \underbrace{\mathbf{w}^T L}_{\text{wage costs}} + \underbrace{r \mathbf{p}^T K}_{\text{profits on advanced stocks}}
	\end{equation}
	
	Given physical technology, there exists a purely technical upper bound on the profit rate, $r_A$ (defined such that the dominant eigenvalue of $\tilde{A} + r_A K$ equals 1). For any $r \in [0, r_A)$, the above system admits a unique strictly positive price solution:
	\begin{equation}
		\mathbf{p}^T = \mathbf{w}^T L[I - \tilde{A} - r K]^{-1}
	\end{equation}
	
	Substituting this price solution into the workers' subsistence constraint $\mathbf{w}^T \geq \mathbf{p}^T B$ yields a generalized labor-reproduction operator parameterized by the profit rate.
	
	\begin{definition}[Parametric labor-reproduction matrix]\label{def:M_r}
		For any $r \in [0, r_A)$, define the parametric labor-reproduction matrix:
		\begin{equation}\label{eq:M_r}
			M(r) := L[I - \tilde{A} - r K]^{-1}B
		\end{equation}
		Then the nominal-wage constraint sustaining the reproduction of labor power can be written as $\mathbf{w}^T[I - M(r)] \geq \mathbf{0}^T$.
	\end{definition}
	
	After introducing the stock-based profit-rate parameter, this generalized matrix has well-behaved algebraic properties, summarized in the proposition below.
	
	\begin{proposition}[Monotonic properties of the parametric reproduction matrix]\label{prop:M_r_properties}
		Under Assumptions~\ref{ass:A_L} and~\ref{ass:B}, for any $r \in [0, r_A)$:
		\begin{enumerate}[label=(\roman*)]
			\item $M(r) > 0$ (a strictly positive matrix) and is continuous in $r$;
			\item as $r$ increases, every element of $M(r)$ increases strictly;
			\item the dominant eigenvalue $\lambda_{\max}(M(r))$ increases strictly in $r$, and $\lambda_{\max}(M(r)) \to +\infty$ as $r \to r_A$.
		\end{enumerate}
		\textit{(A proof based on the series expansion of the inverse matrix is provided in the Appendix.)}
	\end{proposition}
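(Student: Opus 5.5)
The plan is to work throughout with the resolvent $N(r):=[I-\tilde{A}-rK]^{-1}$ and the matrix $T_r:=\tilde{A}+rK$, and to move every claim about $M(r)=LN(r)B$ onto $N(r)$. For $r\in[0,r_A)$ we have $\rho(T_r)<1$, because $\rho(T_r)$ is nondecreasing in $r$ and $\rho(T_{r_A})=1$. Hence the Neumann series gives $N(r)=\sum_{k\ge 0}T_r^k\geq 0$. Since $T_r\ge \tilde{A}$ and $\tilde{A}$ is irreducible (Assumption~\ref{ass:A_L}), $T_r$ is irreducible and $N(r)>0$ entrywise.

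\textbf{Part (i).} For every $j$, the $j$-th column of $N(r)B$ is a strictly positive vector times a nonzero nonnegative column $\mathbf{b}_j$ (Assumption~\ref{ass:B}), so it is strictly positive. Left multiplication by the positive diagonal $L$ preserves this, so $M(r)>0$; at $r=0$ this recovers Lemma~\ref{lem:M_spectrum}. Continuity follows because matrix inversion is continuous on the invertible matrices.

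\textbf{Part (ii).} Differentiate the identity $(I-T_r)N(r)=I$ to get
\begin{equation*}
\frac{dN}{dr}=N(r)\,K\,N(r).
\end{equation*}
If $K\neq 0$, then $N K N>0$, since $N>0$ and $K\ge 0$ is nonzero. By the same column argument as in part (i), $L\,NKN\,B>0$, so every entry of $M(r)$ is strictly increasing. I will state explicitly that $K\neq 0$ is needed, and that it is implicit in the finiteness of $r_A$: if $K=0$, then $M(r)$ is constant in $r$ and $r_A=\infty$.

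\textbf{Part (iii), monotonicity of the eigenvalue.} This follows from the Perron--Frobenius comparison theorem. For $r<r'$ we have $M(r)\le M(r')$ with $M(r)\neq M(r')$, and $M(r')$ is irreducible, so $\rho(M(r))<\rho(M(r'))$.

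\textbf{Part (iii), divergence as $r\to r_A$ (the main obstacle).} Entrywise blow-up of $N(r)$ must be converted into a lower bound on $\rho(M(r))$, even though $B$ may be very sparse. The steps are as follows.
\begin{enumerate}
\item Use $\rho(LNB)=\rho(NBL)$ and work with the positive matrix $NBL$.
\item Let $\mathbf{x}_r>0$ be the right Perron vector of $T_r$, normalized with $\|\mathbf{x}_r\|=1$, and let $\rho_r:=\rho(T_r)\uparrow 1$ as $r\to r_A$. Set $\mathbf{s}_r:=BL\mathbf{x}_r$. Because $B$ has no zero column, $\mathbf{s}_r$ is nonzero, nonnegative and bounded away from $0$ uniformly for $r$ near $r_A$. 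This uses that $\mathbf{x}_r\to \mathbf{x}_{r_A}>0$ by continuity of the simple Perron vector.
\item Let $Q_r:=\sum_{k=0}^{n-1}T_r^k$. Irreducibility gives $Q_r>0$, so by compactness $Q_r\mathbf{s}_r\ge \eta\,\mathbf{x}_r$ for some fixed $\eta>0$ and all $r$ near $r_A$.
\item Since $N$ and $Q_r$ commute and $NQ_r=\sum_j \min(j+1,n)\,T_r^j\le nN$, we obtain
\begin{equation*}
N\mathbf{s}_r\;\ge\;\tfrac{1}{n}\,Q_r N\mathbf{s}_r\;=\;\tfrac1n\, N Q_r\mathbf{s}_r\;\ge\;\tfrac{\eta}{n}\,N\mathbf{x}_r\;=\;\frac{\eta}{n(1-\rho_r)}\,\mathbf{x}_r .
\end{equation*}
\item This says $NBL\,\mathbf{x}_r\ge \mu_r \mathbf{x}_r$ with $\mathbf{x}_r>0$ and $\mu_r:=\eta/(n(1-\rho_r))$. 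The Collatz--Wielandt bound then gives $\rho(M(r))\ge \mu_r\to+\infty$.
\end{enumerate}
The delicate points are the uniformity of $\eta$ and of the lower bound on $\mathbf{s}_r$ as $r\to r_A$. I would settle both by passing to the limit matrix $T_{r_A}$, which is irreducible with $\rho(T_{r_A})=1$, and invoking continuity of its Perron vector.
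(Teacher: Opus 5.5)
Your proof is correct. Part (i) and the eigenvalue-monotonicity half of (iii) follow the paper's own argument: the Neumann series plus irreducibility give $N(r)>0$, and a Perron--Frobenius comparison handles $M(r)\le M(r')$, $M(r)\neq M(r')$. The real differences are in (ii) and in the divergence claim.

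In (ii), the paper writes $R(r_2)-R(r_1)$ as a difference of two series and asserts that it becomes strictly positive ``through accumulation over the infinite series.'' Your identity $dN/dr=NKN>0$, equivalently $N(r_2)-N(r_1)=(r_2-r_1)N(r_2)KN(r_1)$, is the precise form of that claim. Both versions need $K\neq 0$, which the paper also invokes.

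For the divergence, the paper only says that $\rho(\tilde{A}+rK)\to 1$ makes the entries of $R(r)$, and hence of $M(r)$, blow up, and that ``consequently'' $\lambda_{\max}(M(r))\to+\infty$. That inference is not automatic, since a nonnegative matrix can have an unbounded entry while its spectral radius stays bounded. To finish the paper's route you would have to show that \emph{every} entry of $R(r)$ diverges, which needs an irreducibility argument, and then apply a minimum-row-sum bound. Your argument avoids this by exhibiting a subinvariant vector for $NBL=L^{-1}M(r)L$. It also yields an explicit rate, $\rho(M(r))\ge \eta/\bigl(n(1-\rho(\tilde{A}+rK))\bigr)$, at the cost of a little more machinery. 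The smoothing matrix $Q_r$ is exactly what copes with a sparse $B$, which is the case the paper's one-line argument passes over.

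Two small remarks. First, the uniformity you call delicate needs no limiting argument or continuity of Perron vectors. Take $\|\mathbf{x}_r\|_1=1$, $q_0:=\min_{i,j}[Q_0]_{ij}$ and $b_0:=\min_j l_j\sum_i\beta_{ij}>0$. Then $Q_r\ge Q_0=\sum_{k=0}^{n-1}\tilde{A}^k>0$ and $\mathbf{1}^T BL\mathbf{x}_r\ge b_0$, so $Q_r\mathbf{s}_r\ge q_0 b_0\mathbf{1}\ge q_0 b_0\mathbf{x}_r$ uniformly on all of $[0,r_A)$.

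Second, in your preliminary step ``nondecreasing'' only gives $\rho(T_r)\le 1$. Strict inequality follows from the same comparison theorem you use in (iii), applied to $T_r\le T_{r_A}$, $T_r\neq T_{r_A}$, with $T_{r_A}$ irreducible. Alternatively, read $r_A$ as the smallest root, as the paper implicitly does.
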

	
	The ``elementwise strict monotonicity'' in the proposition encodes a class-distribution intuition: capitalists' pursuit of a higher profit rate appears as an amplification via the inverse matrix, raising production prices relative to nominal wages. As a result, to purchase the same subsistence bundle ($B$), workers require the economy-wide circulation sphere to ``advance'' a longer accounting time.
	
	\subsection{The maximum feasible profit rate}
	
	As $r$ rises, the dominant eigenvalue $\lambda_{\max}(M(r))$ increases strictly. Once this eigenvalue exceeds 1, labor power can no longer be reproduced under the corresponding price system. This motivates an upper bound on profit rates compatible with reproduction.
	
	\begin{theorem}[Existence of the maximum feasible profit rate]\label{thm:r_star}
		Assuming the system features a physical surplus ($\lambda_{\max}(M(0)) < 1$), there exists a unique critical profit rate $r^* \in (0, r_A)$ such that $\lambda_{\max}(M(r^*)) = 1$. The value $r^*$ is the upper bound on the profit rate compatible with sustained reproduction of labor power.
		\textit{(A proof based on the intermediate value theorem for continuous functions is provided in the Appendix.)}
	\end{theorem}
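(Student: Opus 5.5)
The plan is to apply the intermediate value theorem to the scalar function $f(r) := \lambda_{\max}(M(r))$ on $[0, r_A)$, using Proposition~\ref{prop:M_r_properties} as the main input.

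First I establish continuity of $f$. By Proposition~\ref{prop:M_r_properties}(i), $M(r)$ is strictly positive and depends continuously on $r$. For a positive matrix the Perron root is a simple eigenvalue, so it is a continuous function of the entries; continuity of eigenvalues as roots of the characteristic polynomial suffices. Hence $f$ is continuous on $[0,r_A)$.

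Next I collect the endpoint and monotonicity facts.
\begin{itemize}
\item At the left endpoint, $M(0)=M_0$, so the hypothesis gives $f(0)=\lambda_{\max}(M_0)<1$.
\item By Proposition~\ref{prop:M_r_properties}(iii), $f$ is strictly increasing and $f(r)\to+\infty$ as $r\to r_A$.
\item Consequently there is some $\bar r<r_A$ with $f(\bar r)>1$.
\end{itemize}
The intermediate value theorem on $[0,\bar r]$ then yields $r^*\in(0,\bar r)\subset(0,r_A)$ with $f(r^*)=1$. We have $r^*>0$ because $f(0)<1$. Uniqueness follows from strict monotonicity: $f(r)<1$ for $r<r^*$ and $f(r)>1$ for $r>r^*$.

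Finally I justify calling $r^*$ the upper bound compatible with reproduction. For $r>r^*$, suppose some $\mathbf{w}>0$ satisfied $\mathbf{w}^T(I-M(r))\ge\mathbf{0}^T$, i.e.\ $\mathbf{w}^T M(r)\le \mathbf{w}^T$. The Collatz--Wielandt (subinvariance) bound for positive matrices then gives $f(r)\le 1$, a contradiction. So no positive wage vector satisfies the subsistence constraint when $r>r^*$. Conversely, at any $r\le r^*$, the left Perron vector $\mathbf{w}$ of $M(r)$ satisfies the constraint.

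I expect the main obstacle to be the divergence $f(r)\to\infty$. If Proposition~\ref{prop:M_r_properties}(iii) is taken as given, the argument is immediate. If it has to be checked, I would argue as follows. As $r\to r_A$, the Perron root of $\tilde A+rK$ tends to $1$, so $[I-\tilde A-rK]^{-1}$ blows up along its Perron direction. I would need irreducibility of $\tilde A + rK$ (inherited from $\tilde A$) together with $L>0$ diagonal and the no-zero-column property of $B$. With these, at least one entry of $M(r)$ diverges. Since $M(r)$ is positive, its Perron root is at least each of its diagonal entries, and I would then try to show some diagonal entry, or a suitable entry-sum bound, diverges. If that step proves delicate, it suffices for the theorem merely to have some $r<r_A$ with $f(r)>1$.
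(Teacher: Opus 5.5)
Your proposal is correct and follows essentially the same route as the paper. Both apply the intermediate value theorem to $f(r)=\lambda_{\max}(M(r))$, taking continuity, strict monotonicity, $f(0)<1$ and divergence from Proposition~\ref{prop:M_r_properties}, and both obtain uniqueness from strict monotonicity. Both also rule out $r>r^*$ via the positive-eigenvector argument behind Theorem~\ref{thm:existence}; your subinvariance bound is the same fact in Collatz--Wielandt form.

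Your extras are harmless refinements that the paper leaves implicit: justifying continuity of the Perron root, working on a compact interval $[0,\bar r]$, and checking that the left Perron vector of $M(r)$ is feasible for every $r\le r^*$. Your fallback on divergence also closes. Near $r_A$ the resolvent $[I-\tilde A-rK]^{-1}$ is dominated by $\mathbf{u}\mathbf{v}^T/\bigl((1-\lambda_{\max}(\tilde A+rK))\,\mathbf{v}^T\mathbf{u}\bigr)$, where $\mathbf{u},\mathbf{v}$ are the positive Perron vectors. So every entry of $M(r)$, in particular each diagonal entry, diverges.
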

	
	The bound $r^*$ is strictly smaller than the purely technical maximum profit rate $r_A$. Sraffa's $r_A$ corresponds to the extreme abstraction in which wages are driven down to zero. In our framework with matrix $B$, the critical profit rate $r^*$ must leave workers with a minimum subsistence bundle. The gap $r_A - r^*$ measures the rigid constraint imposed by the reproduction floor on capital's drive for valorization.
	
	To clearly characterize the distributive range in the wage dimension, we define the feasible set of relative nominal wage structures (normalized by $w_{1}^{rel}=1$) implied by $M(r)$.
	
	\begin{definition}[Price--Wage Feasible Domain]\label{def:Theta_price}
		For a given profit rate $r \in [0, r^*]$, define the set of relative nominal wage structures that sustain reproduction as:
		\begin{equation}
			\Theta^{price}(r) = \left\{\mathbf{w}^{rel}\in\mathbb{R}^n_{++} :
			w_1^{rel}=1,\; (\mathbf{w}^{rel})^T(I-M(r))\geq\mathbf{0}^T\right\}
		\end{equation}
	\end{definition}
	
	\subsection{Profit--wage duality}
	
	By examining how the boundary of the parameterized set $\Theta^{price}(r)$ shifts, we uncover a ``profit--wage dual structure.''
	
	\begin{theorem}[Duality between the profit rate and distributive sets]\label{thm:duality}
		An increase in the capitalist profit rate has a strict crowding-out effect on the feasible set of nominal wage structures. Specifically:
		\begin{enumerate}[label=(\roman*)]
			\item \textbf{Strict monotone contraction:} for any $0 \leq r_1 < r_2 \leq r^*$, there is a strict inclusion $\Theta^{price}(r_2) \subsetneq \Theta^{price}(r_1) \subseteq \Theta^{val}$.
			\item \textbf{Singleton degeneration:} when the profit rate reaches its upper bound $r = r^*$, the strict interior disappears and the closed set $\Theta^{price}(r^*)$ degenerates to a unique point, namely the positive eigenvector of $M(r^*)$.
		\end{enumerate}
		\textit{(For rigorous proofs of these set-inclusion relations, see the Appendix.)}
	\end{theorem}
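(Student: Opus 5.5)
Part (i) follows from elementwise monotonicity of $M(r)$ (Proposition~\ref{prop:M_r_properties}(ii)) together with strict positivity of the wage vectors. Part (ii) reuses the orthogonality argument behind Proposition~\ref{prop:collapse}, with $M_0$ replaced by $M(r^*)$.

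\emph{Weak inclusion.} Take $0\le r_1<r_2\le r^*$ and $\mathbf{w}\in\Theta^{price}(r_2)$. Since $\mathbf{w}>0$ and $M(r_2)\ge M(r_1)$ elementwise, we get $\mathbf{w}^T M(r_1)\le \mathbf{w}^T M(r_2)\le \mathbf{w}^T$. Hence $\mathbf{w}\in\Theta^{price}(r_1)$. Because $M(0)=M_0$, the same argument gives $\Theta^{price}(r_1)\subseteq\Theta^{price}(0)=\Theta^{val}$; the normalization $w_1=1$ matches $c_1=1$, so these are literally the same sets.

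\emph{Strictness.} I will exhibit a point of $\Theta^{price}(r_1)$ that is not in $\Theta^{price}(r_2)$, using the Perron vector at the larger rate. Let $\mathbf{y}_2>0$ be the normalized left Perron vector of $M(r_2)$, with eigenvalue $\lambda_2=\lambda_{\max}(M(r_2))\le 1$; this exists by strict positivity of $M(r_2)$ and Perron--Frobenius. By Proposition~\ref{prop:M_r_properties}(ii), $M(r_1)<M(r_2)$ strictly in every entry, so
$$\mathbf{y}_2^T M(r_1)<\mathbf{y}_2^T M(r_2)=\lambda_2\mathbf{y}_2^T\le \mathbf{y}_2^T$$
componentwise. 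Thus $\mathbf{y}_2$ lies in the strict interior of $\Theta^{price}(r_1)$. Pick any direction $\mathbf{d}$ with $d_1=0$ and $\mathbf{d}^T(I-M(r_2))$ having a negative component. Such a $\mathbf{d}$ exists because $I-M(r_2)$ is invertible when $\lambda_2<1$, and is nonzero in the relevant coordinates when $\lambda_2=1$, since its kernel is spanned by $\mathbf{y}_2$ with $y_{2,1}\neq 0$. Then for small $\epsilon>0$ the point $\mathbf{y}_2+\epsilon\mathbf{d}$ stays in $\Theta^{price}(r_1)$ by openness of the strict inequalities, but violates the constraint at $r_2$.

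\emph{Main obstacle.} The only delicate part of strictness is the case $\lambda_2<1$, where the constraint $\mathbf{y}_2^T(I-M(r_2))\ge 0$ is slack. There I would instead move along the ray from $\mathbf{y}_2$ toward the boundary of $\Theta^{price}(r_2)$. The set $\Theta^{price}(r_2)$ is bounded (Proposition~\ref{prop:convex} applies verbatim with $M(r_2)$ in place of $M_0$) and has nonempty interior, so it has a boundary point $\mathbf{w}^b$ at which some constraint $j$ is tight: $[\mathbf{w}^{bT}(I-M(r_2))]_j=0$. By strict elementwise monotonicity, $[\mathbf{w}^{bT}(I-M(r_1))]_j>0$, and every other constraint at $r_1$ is weakly satisfied by the first step. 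So $\mathbf{w}^b$ lies in $\Theta^{price}(r_1)$ and is strictly interior in coordinate $j$. Pushing slightly outward across face $j$ gives a point in $\Theta^{price}(r_1)\setminus\Theta^{price}(r_2)$. This boundary-pushing argument handles both cases uniformly, so I would present it as the main proof.

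\emph{Part (ii).} At $r=r^*$, Theorem~\ref{thm:r_star} gives $\lambda_{\max}(M(r^*))=1$, and $M(r^*)>0$. The argument is the one from Proposition~\ref{prop:collapse}. Let $\mathbf{x}>0$ be the right Perron vector. If $\mathbf{w}^T(I-M(r^*))=\mathbf{s}^T\ge 0$, then $\mathbf{s}^T\mathbf{x}=\mathbf{w}^T(I-M(r^*))\mathbf{x}=0$, so $\mathbf{s}=0$ because $\mathbf{x}>0$. Therefore $\mathbf{w}$ is a left eigenvector for eigenvalue $1$. By simplicity of the Perron root, $\mathbf{w}$ is a multiple of $\mathbf{y}^*(r^*)$, and the normalization $w_1=1$ pins it down. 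The interior is then empty, since every feasible point has $\mathbf{s}=0$.
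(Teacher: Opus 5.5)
Your proof is correct. The weak inclusion and part (ii) are the paper's own arguments: strict elementwise monotonicity of $M(r)$, and orthogonality of $\mathbf{w}^T(I-M(r^*))\ge\mathbf{0}^T$ to the right Perron vector as in Proposition~\ref{prop:collapse}. You also usefully make explicit $\Theta^{price}(r_1)\subseteq\Theta^{price}(0)=\Theta^{val}$, which the paper leaves implicit.

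Where you differ is in the witness for strictness. The paper takes a point $\mathbf{w}^*$ of the \emph{larger} set $\Theta^{price}(r_1)$ at which some constraint $j_0$ binds; such a point exists because that set is nonempty and bounded. Then $\bigl(\mathbf{w}^{*T}[I-M(r_2)]\bigr)_{j_0}=-\sum_i w_i^*[M(r_2)-M(r_1)]_{ij_0}<0$ immediately, with no perturbation needed.

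You instead take a tight point $\mathbf{w}^b$ of the \emph{smaller} set and push outward. That needs two extra justifications, and your sketch supplies only half of them:
\begin{enumerate}[label=(\alph*)]
\item Every $r_1$-constraint must hold \emph{strictly} at $\mathbf{w}^b$, not just constraint $j$. ``Weakly satisfied'' for the others would not license the push. Your own computation $\mathbf{w}^{bT}[I-M(r_1)]=\mathbf{w}^{bT}[I-M(r_2)]+\mathbf{w}^{bT}[M(r_2)-M(r_1)]$ already gives strict positivity in every coordinate, so say so.
\item An outward direction with $d_1=0$ must exist. It does, because column $j$ of $I-M(r_2)$ has a nonzero entry in rows $2,\dots,n$. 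Alternatively, argue topologically: within the slice $w_1=1$, $\mathbf{w}^b$ is a boundary point of $\Theta^{price}(r_2)$ and an interior point of $\Theta^{price}(r_1)$.
\end{enumerate}

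Note also that $\Theta^{price}(r_2)$ has empty interior when $r_2=r^*$, contrary to your premise. The argument still works there, because the single point of that set has every constraint tight.

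Both routes work; the paper's choice of witness is simply shorter. Your initial Perron-vector sketch can be discarded, since, as you observed, it only works when $r_2=r^*$.
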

	
	This theorem describes class conflict at the macro level. A higher profit rate $r$ means a larger share of national income is appropriated by capital in accounting terms; in the algebra of sets, it means the feasible wage set $\Theta^{price}(r)$ continuously contracts inward. This shows that the expansion of profits not only reduces workers' physical share but also, more deeply, erodes society's flexibility in differentiating nominal wages across industries. As the profit rate approaches $r^*$, society loses the bargaining room to adjust distribution altogether.
	\section{Revisiting the Transformation Problem}
	
	Having clarified the two-layer algebraic structure constituted by the value-generation set $\Theta^{val}$ and the price-distribution set $\Theta^{price}(r)$, this section engages in a dialogue with the long-standing ``Transformation Problem'' in Marxian political economy and provides a rigorous compatibility result from the viewpoint of cross-space mappings.
	
	\subsection{The traditional impasse and a change in perspective}
	
	Since Bortkiewicz's (1907) classic critique, it has been widely held that, given technology, when transforming labor values into prices of production with a uniform profit rate, the two identities Marx hoped to preserve---``total price = total value'' and ``total profit = total surplus value'' (hereafter, the ``two macro equalities'')---generally cannot both hold. For decades, debates on the transformation problem have largely focused on redefining the measurement basis of values or prices so as to rescue these identities.
	
	This paper argues that the algebraic impasse in much of the traditional transformation literature stems, to a large extent, from two unnoticed a priori assumptions. First, it presumes that the skilled-to-simple labor reduction vector $\mathbf{c}$ is a single, exact number rather than a multidimensional region with substantial interior degrees of freedom. Second, it attempts to forcibly stitch together ``the physical generation of values'' and ``the nominal distribution through prices'' within one and the same mathematical space.
	
	Departing from the search for a unique constant solution, we reformulate the transformation problem as a macroeconomic mapping process across distinct spaces. In Marx's theory, the substantive creation of value takes place in the \textbf{sphere of production}, constrained by the objective physical reproduction floor (i.e., $\Theta^{val}$ as defined in this paper); by contrast, the realization of prices of production occurs in the \textbf{sphere of circulation and distribution}, constrained by capital's pursuit of an average profit rate (i.e., $\Theta^{price}(r)$).
	
	On this basis, we argue that the essence of transformation compatibility is to accomplish two layers of macro balancing in sequence. First, by introducing a ``labor--money'' conversion scalar, one ensures dimensionally that ``total price equals total value,'' aligning labor accounting with monetary accounting at the level of aggregate scale. Second, and more crucially, one must \textbf{search within the broad evaluative space permitted by the underlying physical network ($\Theta^{val}$) for those reduction structures $\mathbf{c}^*$ that make ``total surplus value'' exactly equal to ``total profit.''}
	
	Because the true reduction coefficients form a multidimensional region with interior freedom, the ensuing derivations show that as long as the macro profit rate lies in a reasonable compatibility range, reduction structures satisfying the two macro equalities not only exist objectively but are also non-unique (algebraically, they appear as a multidimensional intersection inside the feasible domain). This implies that, once skilled-labor reduction is treated as an elastic social evaluation structure, the underlying material production network naturally contains enough ``room'' for the law of value and the system of prices of production to be macroscopically consistent.
	
	\subsection{The first equality}
	
	To incorporate the macro equalities into our framework, we must aggregate social values and prices at the macro level. Given the vector of gross physical outputs $\mathbf{x} > \mathbf{0}$ and any chosen reduction vector $\mathbf{c} \in \Theta^{val}$, total labor value is:
	\begin{equation}\label{eq:value_functional}
		F(\mathbf{c}) := \mathbf{v}^T \mathbf{x} = \mathbf{c}^T L(I-\tilde{A})^{-1}\mathbf{x}
	\end{equation}
	
	On the price side, for a uniform profit rate $r \in [0, r^*]$, suppose the nominal distributive system selects a relative wage structure $\mathbf{w}^{rel} \in \Theta^{price}(r)$ that satisfies reproduction. The corresponding baseline total price of production in relative terms is
	$P^{rel}(r) = \mathbf{w}^{rel,T} L[I-\tilde{A}-rK]^{-1}\mathbf{x}$.
	
	To make the first equality ``total price = total value'' hold, introduce a scalar multiplier $\kappa > 0$ converting labor-time units into monetary units (i.e., the monetary expression of labor time, MELT), and define actual nominal wages as $\mathbf{w} = \kappa \mathbf{w}^{rel}$. Imposing $\kappa P^{rel}(r) = F(\mathbf{c})$ uniquely pins down this mapping multiplier:
	\begin{equation}
		\kappa = \frac{\mathbf{c}^T L(I-\tilde{A})^{-1}\mathbf{x}}{\mathbf{w}^{rel,T} L [I-\tilde{A}-rK]^{-1}\mathbf{x}}
	\end{equation}
	
	Since $\mathbf{w}^{rel}$ is given a priori in $\Theta^{price}(r)$, multiplying it by any positive scalar $\kappa$ does not alter the algebraic property required for workers' physical reproduction. Hence, via the scaling mapping induced by $\kappa$, the first equality is guaranteed mathematically without violating the subsistence-floor constraint.
	
	\subsection{The second equality}
	
	Given the first equality, Marx's transformation theory further requires ``total profit = total surplus value'' (the second equality).
	
	First, accounting for total surplus value must be grounded strictly in the underlying physical reproduction network (i.e., $M_0$): it equals total new value created by workers minus the objective value of labor power (the subsistence bundle):
	\begin{equation}\label{eq:total_surplus}
		S(\mathbf{c}) = \mathbf{c}^T L \mathbf{x} - \mathbf{c}^T M_0 L \mathbf{x} = \mathbf{c}^T L(I-\tilde{A})^{-1} \left[ (I-\tilde{A}-BL)\mathbf{x} \right]
	\end{equation}
	
	Second, consider total profit in the price space. Since profits are claimed on the basis of the capital stock $K$, total profit can be written as $\Pi(r) = r \mathbf{p}^T K \mathbf{x}$. To facilitate cross-space comparison, define a crucial macro scalar $\gamma(r)$: under the given profit rate and nominal distributive structure, it measures the \textbf{macro profit share} of total profit in total price of production (or total value):
	\begin{equation}
		\gamma(r) = \frac{r \mathbf{w}^{rel,T} L [I - \tilde{A} - r K]^{-1} K \mathbf{x}}{\mathbf{w}^{rel,T} L [I - \tilde{A} - r K]^{-1} \mathbf{x}} \quad \in (0, 1)
	\end{equation}
	Using this scalar, total profit simplifies greatly to $\Pi(r) = F(\mathbf{c}) \cdot \gamma(r)$.
	
	\begin{theorem}[An algebraic condition for the second equality]\label{thm:two_equalities}
		Given that the first equality holds, ``total profit = total surplus value'' holds if and only if the chosen reduction vector $\mathbf{c}$ lies in the intersection of the value-feasible domain $\Theta^{val}$ and a particular affine hyperplane $H_2(r)$:
		\begin{equation}\label{eq:second_hyperplane}
			\mathbf{c}^T \boldsymbol{\eta}_{new}(r) = 0
		\end{equation}
		where the hyperplane normal vector is
		$\boldsymbol{\eta}_{new}(r) := L(I-\tilde{A})^{-1} \Big[ (\tilde{A}+BL)\mathbf{x} - (1-\gamma(r))\mathbf{x} \Big]$.
		\textit{(See the Appendix for the algebraic derivation.)}
	\end{theorem}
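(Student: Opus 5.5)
The argument is a direct algebraic equivalence, so I will rewrite the condition $\Pi(r)=S(\mathbf{c})$ as a single linear equation in $\mathbf{c}$. I will work throughout with the multiplier $\kappa$ fixed by the first equality, so that total price equals $F(\mathbf{c})$.

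The first step is to justify $\Pi(r)=F(\mathbf{c})\gamma(r)$. Actual prices are $\mathbf{p}^T=\kappa\,\mathbf{w}^{rel,T}L[I-\tilde{A}-rK]^{-1}$. Hence
\[
\Pi(r)=r\mathbf{p}^TK\mathbf{x}=\kappa\, r\,\mathbf{w}^{rel,T}L[I-\tilde{A}-rK]^{-1}K\mathbf{x}.
\]
Multiplying and dividing by $P^{rel}(r)$ and using $\kappa P^{rel}(r)=F(\mathbf{c})$ gives $\Pi(r)=\gamma(r)F(\mathbf{c})$. The point is that $\gamma(r)$ depends only on $r$, $\mathbf{w}^{rel}$ and $\mathbf{x}$, not on $\mathbf{c}$. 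That independence is what makes the final condition linear in $\mathbf{c}$.

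Next I will rewrite $S(\mathbf{c})$ in the Leontief-inverse form. I will check the identity in \eqref{eq:total_surplus} using $M_0L=L(I-\tilde{A})^{-1}BL$:
\[
\mathbf{c}^TL\mathbf{x}-\mathbf{c}^TL(I-\tilde{A})^{-1}BL\mathbf{x}
=\mathbf{c}^TL(I-\tilde{A})^{-1}\big[(I-\tilde{A})\mathbf{x}-BL\mathbf{x}\big].
\]
Here I write $L\mathbf{x}=L(I-\tilde{A})^{-1}(I-\tilde{A})\mathbf{x}$, and the right-hand side is exactly the displayed expression.

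Then the second equality reads
\[
\mathbf{c}^TL(I-\tilde{A})^{-1}\big[(I-\tilde{A}-BL)\mathbf{x}-\gamma(r)\mathbf{x}\big]=0.
\]
Multiplying by $-1$ and regrouping the bracket as $(\tilde{A}+BL)\mathbf{x}-(1-\gamma(r))\mathbf{x}$ gives precisely $\mathbf{c}^T\boldsymbol{\eta}_{new}(r)=0$. Every step is reversible and multiplies only by nonzero scalars, so the "if and only if" follows.

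Membership in $\Theta^{val}$ is not produced by this computation; it enters only through the standing hypothesis $\mathbf{c}\in\Theta^{val}$. So the equivalence is: for $\mathbf{c}\in\Theta^{val}$, the second equality holds iff $\mathbf{c}\in H_2(r)$, i.e.\ iff $\mathbf{c}\in\Theta^{val}\cap H_2(r)$.

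The main obstacle is not computational but one of bookkeeping. I must make sure $\kappa$, and hence $\Pi$, is evaluated with the same $\mathbf{c}$ as $S$, so that the $\mathbf{c}$-dependence of $\Pi$ enters only through the factor $F(\mathbf{c})$. I also need $P^{rel}(r)>0$, so that $\kappa$ and $\gamma(r)$ are well defined. This follows because $[I-\tilde{A}-rK]^{-1}\geq 0$ for $r<r_A$ (Neumann series, as in Proposition~\ref{prop:M_r_properties}), together with $L>0$, $\mathbf{w}^{rel}>0$ and $\mathbf{x}>0$.

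Finally, the set $H_2(r)=\{\mathbf{c}:\mathbf{c}^T\boldsymbol{\eta}_{new}(r)=0\}$ is a hyperplane through the origin in $\mathbb{R}^n$. Within the normalized slice $c_1=1$, it becomes the affine hyperplane $\sum_{j\ge2}c_j\eta_j=-\eta_1$. This justifies the word "affine" in the statement, provided $\boldsymbol{\eta}_{new}(r)$ is not a multiple of $\mathbf{e}_1$ (a nondegeneracy I would simply note).
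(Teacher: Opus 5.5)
Your proposal is correct and follows the paper's own derivation essentially step for step: write $\Pi(r)=\gamma(r)F(\mathbf{c})$ using the $\kappa$ fixed by the first equality, rewrite $S(\mathbf{c})$ with the Leontief inverse, equate the two, factor out $\mathbf{c}^TL(I-\tilde{A})^{-1}$, and flip the sign of the bracket. Your side remarks are accurate clarifications that the paper leaves implicit: membership in $\Theta^{val}$ is a standing hypothesis rather than an output of the algebra, $P^{rel}(r)>0$ is needed for $\kappa$ and $\gamma(r)$ to be well defined, and $H_2(r)$ is a linear hyperplane in $\mathbb{R}^n$ that becomes affine only in the slice $c_1=1$.
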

	
	The economic meaning of $\boldsymbol{\eta}_{new}(r)$ is profound. The first term inside the brackets, $(\tilde{A}+BL)\mathbf{x}$, represents \textbf{total physical reproduction requirements}, including depreciation, intermediate inputs, and workers' subsistence goods. The second term, $(1-\gamma(r))\mathbf{x}$, represents the \textbf{value remainder} left for sustaining social reproduction after deducting the capitalist profit share from total output. The joint validity of the two equalities requires that total physical requirements and the value remainder coincide exactly as scalars in the pure value space.
	
	\subsection{A compatibility interval}
	
	Does equation~\eqref{eq:second_hyperplane} truly intersect $\Theta^{val}$? Since $\Theta^{val}$ is a bounded closed set confined strictly to the first orthant, an arbitrary linear condition does not in general guarantee that the hyperplane will not ``miss'' the feasible set.
	
	To obtain a rigorous guarantee, we introduce a macro matrix operator capturing the extraction of pure surplus,
	$\hat{S} := L(I-\tilde{A}-BL)^{-1}$, and provide a critical condition for intersection.
	
	\begin{definition}[Critical profit shares based on the surplus operator]\label{def:critical_profit_rate}
		Let the composite extended material matrix be $\hat{A} = \tilde{A} + BL$. For a strictly positive output vector $\mathbf{x} > \mathbf{0}$, define the sector-$i$ critical profit share determining whether the transformation intersection occurs as:
		\begin{equation}\label{eq:r_crit_true}
			\gamma_i^{crit}(\mathbf{x}) := 1 - \frac{[\hat{S}\hat{A}\mathbf{x}]_i}{[\hat{S}\mathbf{x}]_i}
		\end{equation}
		Define the global lower and upper bounds as
		$\gamma_{\min}^{crit} := \min_{i} \gamma_i^{crit}$ and
		$\gamma_{\max}^{crit} := \max_{i} \gamma_i^{crit}$, respectively.
	\end{definition}
	
	\begin{theorem}[Universal compatibility of the two macro equalities]\label{thm:universal_compatibility}
		Assume the system features a physical surplus ($\lambda_{\max}(\tilde{A}+BL) < 1$). If the macro profit share $\gamma(r)$ of the actual economy lies in the compatibility interval $[\gamma_{\min}^{crit}, \gamma_{\max}^{crit}]$, then the hyperplane $H_2(r)$ must pass through the interior of the value-feasible domain $\Theta^{val}$. In this case, the solution set simultaneously satisfying the two macro equalities and the reproduction-floor constraints is a nonempty convex polytope of dimension $n-2$ (hence, infinitely many solutions exist).
	\end{theorem}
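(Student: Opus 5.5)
The plan is to change variables so that $\Theta^{val}$ becomes a simplex, and then read the hyperplane condition off in those coordinates. Under the surplus hypothesis, Theorem~\ref{thm:existence} gives $\lambda_{\max}(M_0)<1$. Since $M_0>0$ by Lemma~\ref{lem:M_spectrum}, the Neumann series shows $(I-M_0)^{-1}=\sum_k M_0^k>0$ entrywise. Set $\mathbf{u}^T:=\mathbf{c}^T(I-M_0)$. The constraint $\mathbf{c}^T(I-M_0)\geq\mathbf{0}^T$ becomes $\mathbf{u}\geq\mathbf{0}$, and conversely every $\mathbf{u}\geq\mathbf{0}$ with $\mathbf{u}\neq\mathbf{0}$ gives $\mathbf{c}^T=\mathbf{u}^T(I-M_0)^{-1}>\mathbf{0}^T$. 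So, before normalization, $\Theta^{val}$ is the image of the nonnegative orthant under a linear bijection. The normalization $c_1=1$ is a single positive linear functional in $\mathbf{u}$, because the first column of $(I-M_0)^{-1}$ is strictly positive. Hence $\Theta^{val}$ is an $(n-1)$-simplex whose vertices are the normalized rows of $(I-M_0)^{-1}$, and its relative interior corresponds exactly to $\mathbf{u}>\mathbf{0}$.

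Next I translate the hyperplane $H_2(r)$ into $\mathbf{u}$-coordinates using the push-through identity
\[
(I-L(I-\tilde{A})^{-1}B)^{-1}L(I-\tilde{A})^{-1}=L(I-\tilde{A}-BL)^{-1}=\hat{S}.
\]
This identity is checked by multiplying both sides on the left by $I-M_0$ and expanding $L(I-\tilde{A})^{-1}(I-\tilde{A}-BL)=L-M_0L$. Substituting $\mathbf{c}^T=\mathbf{u}^T(I-M_0)^{-1}$, we get $\mathbf{c}^TL(I-\tilde{A})^{-1}\mathbf{y}=\mathbf{u}^T\hat{S}\mathbf{y}$ for every vector $\mathbf{y}$. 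Theorem~\ref{thm:two_equalities} then turns the condition $\mathbf{c}^T\boldsymbol{\eta}_{new}(r)=0$ into
\[
\sum_i u_i\Big([\hat{S}\hat{A}\mathbf{x}]_i-(1-\gamma)[\hat{S}\mathbf{x}]_i\Big)=\sum_i u_i\,[\hat{S}\mathbf{x}]_i\,\big(\gamma-\gamma_i^{crit}\big)=0 ,
\]
where the second form follows directly from Definition~\ref{def:critical_profit_rate}. I also need $[\hat{S}\mathbf{x}]_i>0$. This holds because $\hat{A}=\tilde{A}+BL$ is irreducible (it dominates $\tilde{A}$) with $\lambda_{\max}(\hat{A})<1$. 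Therefore $(I-\hat{A})^{-1}>0$, so $\hat{S}=L(I-\hat{A})^{-1}>0$ and $\hat{S}\mathbf{x}>\mathbf{0}$. The weights $\omega_i:=[\hat{S}\mathbf{x}]_i$ are strictly positive, and the intersection problem reduces to the following question: does there exist $\mathbf{u}\geq\mathbf{0}$ (or $\mathbf{u}>\mathbf{0}$) with $\sum_i u_i\omega_i(\gamma-\gamma_i^{crit})=0$?

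With the coefficients $d_i:=\omega_i(\gamma-\gamma_i^{crit})$, the conclusion follows from an elementary sign argument. If $\gamma\in[\gamma_{\min}^{crit},\gamma_{\max}^{crit}]$, some $d_i\leq0$ and some $d_j\geq0$, so a nonzero $\mathbf{u}\geq\mathbf{0}$ solving the equation exists and the intersection with $\Theta^{val}$ is nonempty. If $\gamma$ lies strictly inside the interval, there are coefficients of both strict signs. Starting from any $\mathbf{u}>\mathbf{0}$, I adjust one positive and one negative coordinate to cancel the sum, which produces a strictly positive solution. So $H_2(r)$ meets the relative interior of $\Theta^{val}$. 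Because the normal vector $\mathbf{d}$ is then not proportional to the normalization functional, the hyperplane cuts the $(n-1)$-simplex transversally, and the intersection is a convex polytope of dimension $n-2$. Convexity and boundedness are inherited from Proposition~\ref{prop:convex}, and the image under the linear bijection back to $\mathbf{c}$-coordinates preserves all of these properties.

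The main obstacle is the endpoint and degenerate cases, which the statement glosses over. If $\gamma=\gamma_{\min}^{crit}$ and the minimum is attained at a single index, then every $d_j\geq0$ and solutions require $u_j=0$ wherever $d_j>0$. The intersection is then only a face of $\Theta^{val}$, possibly a single vertex, not an $(n-2)$-dimensional set through the interior. Likewise, if all $\gamma_i^{crit}$ coincide, the equation holds identically and the intersection is all of $\Theta^{val}$. I would handle these by proving the interior and dimension claims for $\gamma$ in the open interval $(\gamma_{\min}^{crit},\gamma_{\max}^{crit})$, with nondegenerate $\gamma_i^{crit}$. For the closed interval I would state only nonemptiness, and flag that the endpoint cases need this qualification.
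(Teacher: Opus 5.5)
Your proposal follows the paper's own route: the substitution $\mathbf{q}^T=\mathbf{c}^T(I-M_0)$, the identity $(I-M_0)^{-1}L(I-\tilde{A})^{-1}=\hat{S}$, and a mixed-sign argument that produces a strictly positive $\mathbf{q}^*$. You also handle the normalization $c_1=1$, the positivity of $\hat{S}\mathbf{x}$, and the $(n-2)$-dimension count, none of which the paper's proof actually establishes. Your endpoint caveat is correct and points to a real flaw in the paper's Step~3, which asserts mixed signs on the closed interval. Fix one sign slip there: at $\gamma=\gamma_{\min}^{crit}$ one has $d_i=\omega_i(\gamma_{\min}^{crit}-\gamma_i^{crit})\le 0$ for all $i$, not $\ge 0$, so solutions need $u_j=0$ wherever $d_j<0$; the conclusion is unchanged.
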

	\textit{(This theorem relies on the algebraic identity $(I-M_0)^{-1}L(I-\tilde{A})^{-1} = L(I-\tilde{A}-BL)^{-1}$; see the Appendix for the proof.)}
	
	Theorem~\ref{thm:universal_compatibility} establishes mathematically that, whenever the macro profit markup under capitalism falls within an objectively bounded interval, there necessarily exist reduction structures---within the broad space of social bargaining---that render the labor-value system fully consistent with the price-of-production system with stock-based profits.
	
	\subsection{Dialogue with the classical transformation literature}
	
	The two-layer mapping and compatibility-interval theory proposed in this paper provides an algebraically coherent account of the two macro equalities. Yet given a century of controversy over transformation in political economy, the economic content of this mathematical framework warrants further clarification. This subsection addresses several central theoretical questions.
	
	\textbf{First: the intrinsic link between distributive boundaries and ``labor creates value.''}
	At first glance, the value-feasible domain $\Theta^{val}$ may appear to be merely a calculation of ``how wages can vary across industries without breaking the system.'' This can create the impression that the model departs from the core proposition that labor creates value. In fact, the mathematical prerequisite for $\Theta^{val}$ to have a strictly positive interior is that the spectral radius of the labor-reproduction operator be strictly less than one ($\lambda_{\max}(M_0) < 1$). This is precisely a mathematical translation of Marx's law of surplus value in Volume I of \emph{Capital}: the distinctive use-value of labor power as a special commodity is that living labor (matrix $L$) creates, in production, a quantity of physical wealth that is objectively greater than the physical wealth required to reproduce itself (matrix $B$). Hence the breadth of the feasible domain is itself created by the value-creating capacity of living labor. The model shows that, although nominal wages are shaped by complex social struggles, such struggles are far from unconstrained; they are strictly bounded by the objective law that living labor must generate a physical surplus.
	
	\textbf{Second: dimensional conversion between time and money, and how this differs from the ``New Interpretation'' (NI).}
	In the main text we introduce the scalar $\kappa$ (MELT) to connect labor time (values) with legal tender (prices). Formally this resembles the NI proposed by Foley (1982) and Duménil (1983), yet the theoretical emphasis is fundamentally different. To make the macro equalities hold, NI no longer treats the value of labor power as an objective physical bundle; instead, it redefines it ex post as the money wage converted into ``representative labor'' via MELT. While this resolves the transformation inconsistency in distributional accounting, it relaxes the classical physical floor constraint that ``workers must consume specific material means of subsistence.''
	
	By contrast, this paper preserves the classical physical subsistence floor (the constant matrix $B$) and shifts theoretical flexibility onto the variable ``skilled-labor reduction coefficients'' determined by social processes. In our framework, $\kappa$ is used only after a structurally offsetting reduction vector has been found within the feasible domain allowed by objective boundaries; it then plays a purely dimensional role, filtering out the scale disturbance of monetary units and inflation.
	
	To clarify that ``$\kappa$ is not an extra condition introduced to force equality,'' the Appendix proves rigorously that even if the external multiplier $\kappa$ is entirely abandoned and the simple-labor normalization ($c_1=1$) is relaxed, the transformation problem remains solvable within our framework. In that case, the two macro equalities become two independent hyperplanes in a high-dimensional feasible space, and the compatibility interval still guarantees that these hyperplanes intersect within the interior of the physical reproduction-feasible region. Therefore, resolving the classical transformation problem requires neither sacrificing the objective physical floor nor relying on an external dimensional multiplier.
	
	It should be noted, however, that while dropping $\kappa$ is mathematically feasible, it can blur the economic interpretation: without $\kappa$ the degrees of freedom of the reduction vector are fully released, and no sector can be designated as simple labor, which makes the analysis more complex.
	
	\textbf{Third: cross-space mapping is not a tautology.}
	Does setting up a complex algebraic structure that makes the two equalities hold amount to a tautology? If the model could force equality under any parameters, it would indeed be mere mathematical play. However, our framework consists of two logically independent layers. The value-feasible domain $\Theta^{val}$ is generated purely from physical requirements $(\tilde{A}, B, L)$ and involves neither profit rates nor money. By contrast, the transformation hyperplane $H_2(r)$ is generated in the sphere of circulation from capitalists' pursuit of a stock-based profit rate $r$.
	
	Our derivations show that intersection between the two is not unconditional. When capitalists claim an excessively high profit rate ($r > r^*$), the hyperplane lies entirely outside the feasible domain, indicating that the system cannot transform values into prices while preserving labor-power reproduction. The existence of the compatibility interval $[\gamma_{\min}^{crit}, \gamma_{\max}^{crit}]$ demonstrates that only when the macro profit markup lies within objective bounds can the complex ``monetary profit distribution'' in circulation necessarily find a corresponding ``structure of surplus-labor extraction'' in the value space. The mapping is therefore conditional and reflects a deep structural property of the system, not a tautology.
	
	Samuelson's ``circular reasoning'' critique presupposes that reduction coefficients must be imported from the price space. Our price-invariance lemma and the purely physical construction of the feasible domain show that the existence space of the value system is independent of price formation, thus cutting off the alleged circularity at the entry point. The value category cannot be ``erased'' here because it provides information absent from price theory: the objective elastic boundary of distributive structures and the mechanism by which capital accumulation compresses that boundary.
	
	\textbf{Fourth: the economic meaning of the strictly positive lower bound of the compatibility interval ($\gamma_{\min}^{crit} > 0$).}
	If capitalists claim no profit at all ($r=0$), so that all money income becomes workers' nominal wages, why can macro consistency fail? Why is the lower bound of the compatibility interval strictly positive?
	
	If profit is zero in the price space ($\Pi=0$), then to satisfy the second equality, total surplus value $S$ in the value space must also be zero. Yet in our model, whenever social output exceeds the historically given subsistence floor implied by $B$, there exists an objective physical surplus, which implies that living labor time must exceed the floor value and hence $S$ is strictly positive. The only way to make $S=0$ would be for workers to transform their very high nominal wages entirely into additional material consumption, forcing the physical consumption matrix $B$ to expand until it exhausts the entire physical surplus. Therefore, as long as we treat the subsistence floor $B$ as relatively stable historically and the economy produces a surplus beyond that floor, the nominal price system must feature a strictly positive minimum macro profit share ($\gamma_{\min}^{crit} > 0$) to ``absorb'' and price that extra physical surplus.
	
	More fundamentally, this mathematical result touches a central contradiction of Marxian political economy: the unity of opposites between use-value (material content) and value (social form). At the physical level, the objective network produces ``surplus use-values'' beyond the subsistence floor of labor power; at the nominal level, if capitalists extract no profit ($r=0$), these additional material riches lose their corresponding ``value form'' in social relations. The strictly positive lower bound of the compatibility interval is precisely a mapping of this contradiction: as long as modern production objectively generates a physical surplus, its social value form must be secured through the extraction of a certain proportion of macro profit. This internal tension between the physical floor of production and nominal monetary distribution is not only a logical requirement for macro balancing in the algebraic system, but also a mathematical expression of the dialectics inherent in the labour theory of value.
	\section{Numerical Experiment}
	
	To illustrate the computational mechanism of the cross-space mapping theory, this section constructs a three-sector macroeconomic example. The example fully incorporates the fixed-capital stock matrix $K$, the depreciation matrix $D$, and the heterogeneous consumption structure $B$ (see the Appendix for the specific matrices). In three dimensions ($n=3$), once the benchmark normalization $c_1=1$ is imposed, the feasible domain is naturally reduced to a two-dimensional slice, providing a clear geometric perspective on the abstract algebraic structure.
	
	Given the specified physical network, the dominant eigenvalue of the system's labor-reproduction operator is $\lambda_{\max}(M_0) = 0.7184 < 1$, indicating that the value-feasible domain $\Theta^{val}$ exists and has a nonempty interior. The composite physical matrix incorporating workers' subsistence consumption implies that the maximum feasible profit rate is $r^* = 12.48\%$, while the purely technical maximum profit rate is $r_A = 44.77\%$. Their gap, $32.29\%$, measures the rigid constraint imposed by the reproduction floor on capital's valorization drive. Using the matrix identity developed in this paper, the critical profit-share interval ensuring both equalities is computed as $[10.69\%, 15.12\%]$.
	
	\begin{figure}[htbp]
		\centering
		\includegraphics[width=0.65\textwidth]{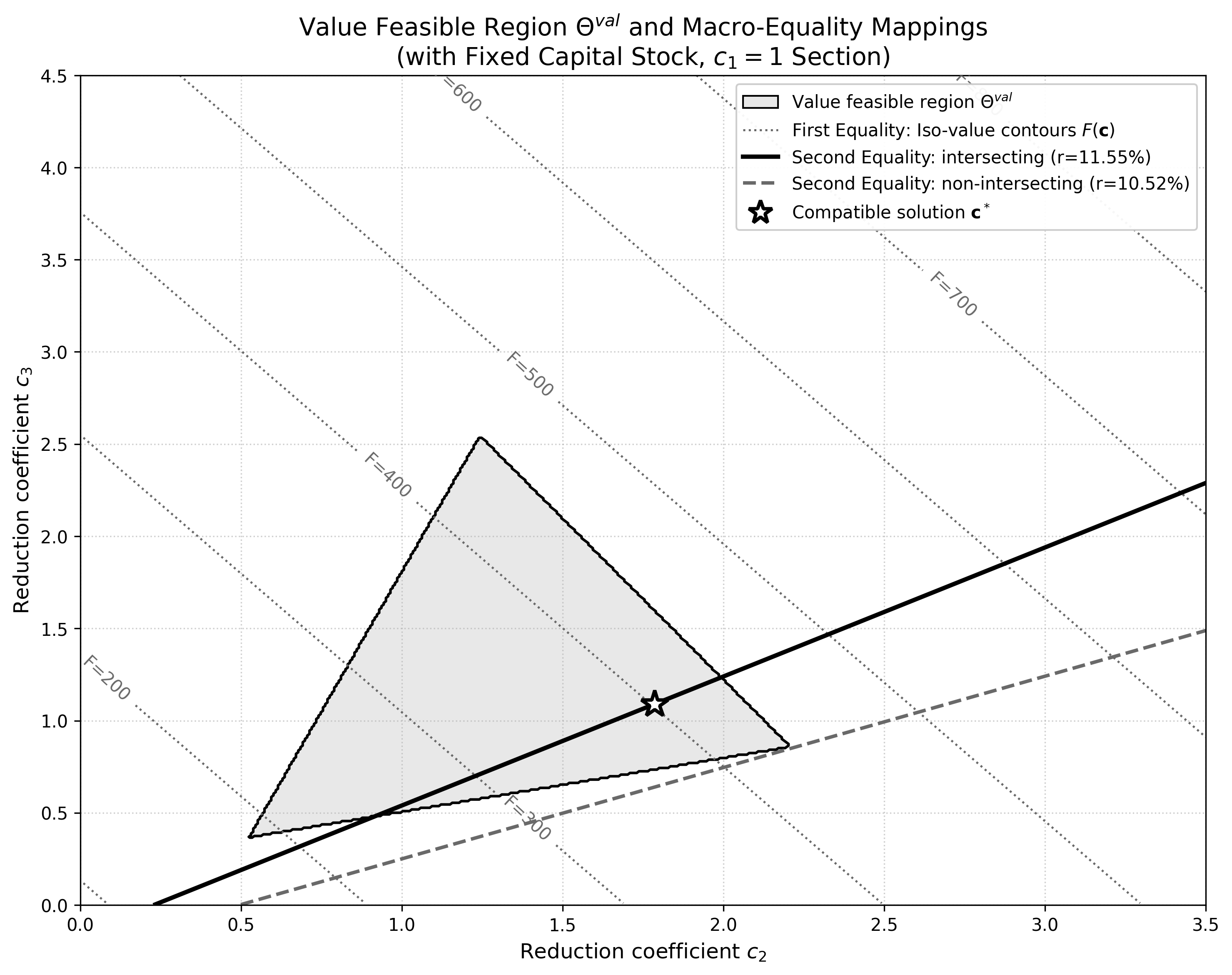}
		\caption{The value-feasible domain $\Theta^{val}$ and the geometric mapping of the two macro equalities (two-dimensional slice with $c_1=1$)}
		\label{fig:geometry}
	\end{figure}
	
	Figure~\ref{fig:geometry} displays, in a two-dimensional slice, the concrete geometry and mapping mechanism of the ``two macro equalities'' in the algebraic space.
	
	First, consider the physical subsistence floor and the value-feasible domain. The three linear inequalities expanded from the objective constraint $\mathbf{c}^T(I-M_0) \geq \mathbf{0}^T$ carve out a closed convex polygon (shaded in light gray). Any interior point $(c_2, c_3)$ represents a reduction vector that guarantees physical reproduction of labor power across all sectors.
	
	Second, consider the geometric meaning of the first equality. As shown by the fine gray dotted lines spanning the entire space, the first equality corresponds geometrically to a family of parallel iso-total-value lines. Each dotted line represents a particular scalar of aggregate total value. This makes clear that the first equality is not an exclusive boundary constraint but rather a background scaling that covers the whole space: for any point in the space, the system can compute an objectively existing total labor value and complete dimensional alignment through the unique multiplier $\kappa$.
	
	Finally, consider the constraint implied by the second equality. Given that the first equality has aligned dimensions, the second equality $\mathbf{c}^T \boldsymbol{\eta}_{new}(r) = 0$ appears as a straight line in the figure. As the profit rate $r$ varies, the intercept of this line shifts. When a test profit rate within the compatibility interval is chosen ($r=11.55\%$, corresponding to a profit share $\gamma = 11.71\%$), the thick black solid line representing the second equality passes exactly through the interior of the light-gray feasible domain. The segment of this solid line lying inside the polygon is precisely the solution set that ``both makes total profit equal total surplus value and preserves the workers' physical floor.'' By contrast, when the profit rate falls to $r=10.52\%$ (corresponding to $\gamma = 10.67\%$, below the lower compatibility bound), the gray dashed line representing the second equality grazes the feasible domain from outside, yielding no admissible intersection.
	
	Selecting an intersection point on the segment, $\mathbf{c}^* = (1.0000, 1.7868, 1.0902)^T$, and substituting it into the model, the numerical results show that the relative errors of ``total price vs.\ total value'' and ``total profit vs.\ total surplus value'' both converge to the computational precision limit of $10^{-9}$. This visualization confirms that treating the skilled-labor reduction coefficients as a multidimensional vector with a meaningful interior is key to achieving macro-level consistency between the law of value and the nominal price system.
	
	\section{Conclusion and Discussion}
	
	Skilled-labor reduction and the transformation problem are classical analytical difficulties in political economy. Much of the existing literature focuses on finding a unique constant solution for reduction proportions within a single algebraic system. This paper proposes a change in perspective: a feasible-domain framework that shifts the focus from ``solving for a single number'' to ``characterizing the distributive space under objective constraints imposed by the physical production network.''
	
	The theoretical analysis yields three main results. First, as long as the macroeconomy can generate a physical surplus, the set of reduction coefficients compatible with the social reproduction floor forms a bounded closed convex set (the feasible domain). Second, capital accumulation and a rising profit rate generate an objective crowding-out effect that strictly contracts the economy-wide feasible range of nominal wage choices. Third, by introducing the macro profit share and an algebraic mapping across the two-layer spaces, the macro equalities ``total price equals total value'' and ``total profit equals total surplus value'' can intersect and become logically consistent within a reasonable compatibility interval.
	
	Our framework offers several implications for understanding the labour theory of value and related macro distributive questions:
	\begin{enumerate}[label=(\roman*)]
		\item \textbf{The law of value appears as an inclusive objective boundary.} The underlying physical production network does not rigidly dictate a unique conversion proportion; rather, it delineates an objective space of action for class struggle and institutional arrangements. As long as the actual distributive system remains within this boundary, macro-level equivalent exchange and surplus-value theory can still hold.
		\item \textbf{Stock capital plays a non-negligible role in macro value analysis.} In empirical work, incorporating the advance of constant-capital stocks into price accounting can help mitigate numerical biases in sectoral surplus measures that arise when relying solely on flow-based models.
		\item \textbf{The accumulation rate imposes a hard constraint on distributive flexibility.} The theory implies that, under given physical technology, an excessively high profit rate and accumulation rate will objectively limit the expansion of residents' consumption space. This provides an objective reference for coordinating real-economy development and preventing excessive distributive inequality.
	\end{enumerate}
	
	As an exploratory study, the present analysis has limitations. First, our theoretical characterization is built on a static input--output technology and does not yet examine the dynamic evolution of feasible domains driven by technical change (e.g., a rising organic composition of capital). Second, we focus primarily on macro-level compatibility; the micro-level mechanisms behind sectoral price fluctuations remain to be explored in future research.
	
	\bibliographystyle{unsrtnat}
	\bibliography{ref_domian}
	
	\newpage
	\appendix

	\renewcommand{\theequation}{\Alph{section}.\arabic{equation}}
	\renewcommand{\thetable}{\Alph{section}.\arabic{table}}
	\renewcommand{\thefigure}{\Alph{section}.\arabic{figure}}
	\setcounter{equation}{0}
	\setcounter{table}{0}
	\setcounter{figure}{0}
	
	\section{Mathematical Proofs}
	This appendix provides rigorous proofs of the main propositions, lemmas, and theorems stated in the main text.
	
	\subsection{Proof of Proposition~\ref{prop:dimensionless}: Dimensionlessness of $M_0$}
	
	\begin{proof}
		The $(i,j)$ element of $M_0 = L(I-\tilde{A})^{-1}B$ can be expanded by matrix multiplication as:
		\[
		[M_0]_{ij} = \sum_{k=1}^{n} L_{ii} [(I-\tilde{A})^{-1}]_{ik} B_{kj}
		= l_i \sum_{k=1}^{n} [(I-\tilde{A})^{-1}]_{ik} \beta_{kj}
		\]
		(where we use that $L$ is diagonal, hence all off-diagonal entries are zero). Consider the units of each factor:
		\begin{itemize}
			\item $l_i$: the direct labor time in sector $i$ required to produce one unit of commodity $i$, with units $[\text{hours} / \text{unit}_i]$;
			\item $[(I-\tilde{A})^{-1}]_{ik}$: the $(i,k)$ element of the composite Leontief inverse, representing the total (direct and indirect) gross output of commodity $i$ induced by one unit of final demand for commodity $k$, with units $[\text{unit}_i / \text{unit}_k]$;
			\item $\beta_{kj}$: the amount of commodity $k$ consumed per hour of work by workers in sector $j$, with units $[\text{unit}_k / \text{hour}]$.
		\end{itemize}
		
		Therefore, the dimensional product of each term inside the summation is:
		\[
		\frac{\text{hours}}{\text{unit}_i}
		\times \frac{\text{unit}_i}{\text{unit}_k}
		\times \frac{\text{unit}_k}{\text{hours}}
		= 1 \quad (\text{cancels out completely; dimensionless})
		\]
		Since every term in the series is dimensionless, their sum $[M_0]_{ij}$ is also a pure number. The proposition follows.
	\end{proof}

	\subsection{Proof of Lemma~\ref{lem:price_invariance}: Price invariance of $M_0$}
	
	\begin{proof}
		Let $\mathbf{p}>0$ be any strictly positive price vector, and define the diagonal price matrix $P=\mathrm{diag}(p_1,\ldots,p_n)$.
		By definition, converting physical matrices into monetary (value) matrices measured in the chosen price unit can be represented as similarity transformations/scalings:
		the composite material input matrix $\tilde{A}_{val}=P\tilde{A}P^{-1}$, the direct labor matrix $\tilde{L}=LP^{-1}$, and the consumption matrix $\tilde{B}=PB$.
		
		We first compute the inverse matrix $(I-\tilde{A}_{val})^{-1}$:
		\[
		I-\tilde{A}_{val} = I - P\tilde{A}P^{-1} = PIP^{-1} - P\tilde{A}P^{-1} = P(I-\tilde{A})P^{-1}.
		\]
		By standard properties of matrix inversion, it follows that $(I-\tilde{A}_{val})^{-1} = P(I-\tilde{A})^{-1}P^{-1}$.
		
		Next, substitute the transformed matrices into the definition of the labor-reproduction matrix:
		\begin{align*}
			M_{val} &= \tilde{L}(I-\tilde{A}_{val})^{-1}\tilde{B} \\
			&= (LP^{-1}) \cdot \left[ P(I-\tilde{A})^{-1}P^{-1} \right] \cdot (PB) \\
			&= L (P^{-1}P) (I-\tilde{A})^{-1} (P^{-1}P) B \\
			&= L(I-\tilde{A})^{-1}B \\
			&= M_0.
		\end{align*}
		Hence, for any price vector used as a scaling transformation, the diagonal matrices $P$ and $P^{-1}$ cancel out completely. Therefore, the extracted labor-reproduction matrix $M_0$ is exactly identical to the one computed in pure physical units. The lemma is proved.
	\end{proof}

	\subsection{Proof of Lemma~\ref{lem:M_spectrum}: Basic spectral properties of $M_0$}
	
	\begin{proof}
		\textbf{(i) Proof that $M_0$ is strictly positive ($M_0 > 0$):}
		
		\textit{Step 1: Show that $S := L(I-\tilde{A})^{-1} > 0$.}
		
		By Assumption~\ref{ass:A_L}, the composite material input matrix $\tilde{A}$ satisfies productivity, i.e., $\lambda_{\max}(\tilde{A}) < 1$.
		Hence $(I-\tilde{A})^{-1}$ exists and admits the Neumann series expansion:
		\[
		(I-\tilde{A})^{-1} = \sum_{k=0}^{\infty} \tilde{A}^k = I + \tilde{A} + \tilde{A}^2 + \cdots.
		\]
		For any element $(i, j)$, we show $\left[(I-\tilde{A})^{-1}\right]_{ij} > 0$.
		
		When $i = j$: the $k = 0$ term contributes $I_{ii} = 1 > 0$; since $\tilde{A} \geq 0$, all subsequent terms are nonnegative, hence $\left[(I-\tilde{A})^{-1}\right]_{ii} \geq 1 > 0$.
		
		When $i \neq j$: by Assumption~\ref{ass:A_L}, $\tilde{A}$ is irreducible. This implies that there exists at least one direct or indirect path from node $i$ to node $j$ in the production network. Therefore, there exists a path length $k^* \in \{1,\ldots,n-1\}$ such that $(\tilde{A}^{k^*})_{ij} > 0$. Hence,
		\[
		\left[(I-\tilde{A})^{-1}\right]_{ij} \geq (\tilde{A}^{k^*})_{ij} > 0.
		\]
		Thus $(I-\tilde{A})^{-1}$ is strictly positive (every entry is $>0$).
		Moreover, Assumption~\ref{ass:A_L} implies $L = \mathrm{diag}(l_1,\ldots,l_n)$ with all $l_j > 0$. Multiplying a strictly positive diagonal matrix by a strictly positive matrix yields a strictly positive matrix. Hence $S = L(I-\tilde{A})^{-1} > 0$.
		
		\textit{Step 2: Show that $M_0 = SB > 0$.}
		
		For any element $(i,j)$ of $M_0$,
		\[
		[M_0]_{ij} = (SB)_{ij} = \sum_{k=1}^n S_{ik} B_{kj}.
		\]
		From Step 1, $S_{ik} > 0$ for all $i,k$.
		By Assumption~\ref{ass:B}, the matrix $B$ has no all-zero column. Thus, for any fixed sector $j$, there exists at least one $k_0 \in \{1,\ldots,n\}$ such that $B_{k_0 j} > 0$.
		Since all terms are nonnegative, focusing on the $k_0$ term yields
		\[
		[M_0]_{ij} = \sum_{k=1}^n S_{ik} B_{kj} \geq S_{ik_0} B_{k_0 j} > 0.
		\]
		Therefore $M_{0,ij} > 0$ for all $i,j$, i.e., $M_0$ is strictly positive.
		
		\textbf{(ii) Irreducibility and aperiodicity:}
		
		Since $M_0$ is strictly positive (no zero entries), any two nodes in its associated directed graph are directly connected in both directions. Hence $M_0$ is irreducible. Moreover, the diagonal elements satisfy $[M_0]_{jj} > 0$, implying that each node has a self-loop. The existence of self-loops rules out periodicity; thus $M_0$ is aperiodic.
		
		\textbf{(iii) Spectral properties:}
		
		Given (i) and (ii), we may directly invoke the Perron--Frobenius theorem for strictly positive matrices: the dominant eigenvalue $\lambda^* = \lambda_{\max}(M_0)$ is a strictly positive real number and is simple. The corresponding left and right eigenvectors $\mathbf{y}^*$ and $\mathbf{x}^*$ have strictly positive components and are unique up to scalar multiplication. Furthermore, $\lambda^*$ strictly exceeds the modulus of any other (possibly complex) eigenvalue, i.e., a strict spectral gap exists. The lemma is proved.
	\end{proof}
	
	\subsection{Proof of Theorem~\ref{thm:existence}: Spectral equivalence for the existence of reduction coefficients}
	
	\begin{proof}
		\textbf{Part 1: Nonemptiness of the strictly feasible region (i) $\Leftrightarrow$ spectral-radius condition (ii)}
		
		\textit{Sufficiency:} Assume $\lambda^* = \lambda_{\max}(M_0) < 1$.
		Since $M_0$ is strictly positive, there exists a strictly positive left eigenvector $\mathbf{y}^* > \mathbf{0}$ such that
		$\mathbf{y}^{*T}M_0 = \lambda^*\mathbf{y}^{*T}$.
		Normalize it so that $y_1^*=1$, and use it as a constructed reduction vector, i.e., let $\mathbf{c}^* = \mathbf{y}^*$.
		Check the exploitation (subsistence-floor) condition:
		\[
		\mathbf{c}^{*T}(I-M_0) = \mathbf{y}^{*T} - \lambda^*\mathbf{y}^{*T} = (1-\lambda^*)\mathbf{y}^{*T} > \mathbf{0}^T.
		\]
		Because $\lambda^* < 1$ and $\mathbf{y}^* > \mathbf{0}$, the result is strictly positive. Hence $\mathbf{c}^*$ yields strictly positive exploitation in every sector, so $\mathbf{c}^* \in \operatorname{int}(\Theta^{val})$. Therefore the interior of the value-feasible domain is nonempty.
		
		\textit{Necessity:} Assume $\lambda^* \geq 1$.
		If $\lambda^* = 1$: for any candidate reduction vector $\mathbf{c} > \mathbf{0}$, by nonnegative matrix theory there exists a strictly positive right eigenvector $\mathbf{x}^* > \mathbf{0}$ such that $(I-M_0)\mathbf{x}^* = \mathbf{0}$.
		Consider the inner product:
		\[
		\mathbf{c}^T(I-M_0)\mathbf{x}^* = \mathbf{c}^T(\mathbf{0}) = 0.
		\]
		If the strictly feasible region were nonempty, there would exist some $\mathbf{c}$ such that $\mathbf{c}^T(I-M_0) > \mathbf{0}^T$. The inner product of a strictly positive row vector with a strictly positive column vector must be strictly positive, contradicting the equality above.
		
		Similarly, if $\lambda^* > 1$, then for the corresponding right eigenvector we have
		$(I-M_0)\mathbf{x}^* = (1-\lambda^*)\mathbf{x}^* < \mathbf{0}$.
		If there existed $\mathbf{c}$ satisfying the subsistence-floor condition $\mathbf{c}^T(I-M_0) \geq \mathbf{0}^T$, then its inner product with $\mathbf{x}^*$ would have to be nonnegative, but $\mathbf{c}^T(1-\lambda^*)\mathbf{x}^* < 0$, a contradiction. Hence the value-feasible domain must be empty.
		
		\textbf{Part 2: Labor-reproduction matrix (ii) $\Leftrightarrow$ extended physical matrix (iii)}
		
		We use a corollary of the Collatz--Wielandt theorem. The condition $\lambda_{\max}(\tilde{A}+BL) < 1$ is equivalent to the existence of some vector $\mathbf{u} > \mathbf{0}$ such that $(\tilde{A}+BL)\mathbf{u} < \mathbf{u}$.
		
		Expanding and rearranging yields $(I-\tilde{A})\mathbf{u} > BL\mathbf{u}$.
		Since $\tilde{A}$ is productive, $(I-\tilde{A})^{-1} \geq 0$. Left-multiplying both sides by $(I-\tilde{A})^{-1}$ preserves the inequality:
		\[
		\mathbf{u} > (I-\tilde{A})^{-1}BL\mathbf{u}.
		\]
		Introduce $\mathbf{h} = L\mathbf{u}$. Because $L > 0$ (positive diagonal) and $\mathbf{u} > \mathbf{0}$, we have $\mathbf{h} > \mathbf{0}$.
		Substitute $\mathbf{u} = L^{-1}\mathbf{h}$ into the inequality:
		\[
		L^{-1}\mathbf{h} > (I-\tilde{A})^{-1}B\mathbf{h}.
		\]
		Left-multiply both sides by the positive diagonal matrix $L$:
		\[
		\mathbf{h} > L(I-\tilde{A})^{-1}B\mathbf{h} = M_0\mathbf{h}.
		\]
		The above algebraic steps are reversible. Therefore,
		\[
		\exists\,\mathbf{u}>\mathbf{0}: (\tilde{A}+BL)\mathbf{u}<\mathbf{u}
		\quad\Longleftrightarrow\quad
		\exists\,\mathbf{h}>\mathbf{0}: M_0\mathbf{h}<\mathbf{h}.
		\]
		Consequently, the dominant eigenvalues of the two matrices are simultaneously less than 1. The theorem is proved.
	\end{proof}
	
	\subsection{Proof of Proposition~\ref{prop:convex}: Bounded closed convexity of $\Theta^{val}$}
	
	After fixing the normalization $c_1=1$, we must show that the remaining variables
	$\boldsymbol{\theta} = (c_2, \ldots, c_n)^T$ form a bounded, closed, and convex set.
	We first introduce an auxiliary lemma about diagonal entries.
	
	\textbf{Lemma A.1} (Diagonal-entry property):
	\textit{Let $M_0 > 0$ and $\lambda^* = \lambda_{\max}(M_0) \leq 1$. Then the diagonal entries satisfy $[M_0]_{jj} < 1$ for all $j = 1,\ldots,n$.}
	
	\textit{Proof of Lemma A.1:}
	Let $\mathbf{y}^T M_0 = \lambda^* \mathbf{y}^T$ with $\mathbf{y} > \mathbf{0}$. Expanding the $j$-th component yields:
	\[
	\sum_{i=1}^n y_i [M_0]_{ij} = \lambda^* y_j.
	\]
	Separating the diagonal term gives
	$y_j [M_0]_{jj} + \sum_{i \neq j} y_i [M_0]_{ij} = \lambda^* y_j$.
	Since $M_0 > 0$ and $\mathbf{y} > \mathbf{0}$, the sum over $i\neq j$ is strictly positive. Therefore,
	\[
	[M_0]_{jj} = \lambda^* - \frac{1}{y_j}\sum_{i \neq j} y_i [M_0]_{ij} < \lambda^* \leq 1.
	\]
	This completes the proof.
	
	\begin{proof}[Proof of Proposition~\ref{prop:convex}]
		\textbf{Convexity and closedness:}
		Let $\boldsymbol{\theta}_1, \boldsymbol{\theta}_2 \in \bar{\Theta}^{val}$ and $\alpha \in [0,1]$.
		By linearity of matrix multiplication,
		\[
		(1, \alpha\boldsymbol{\theta}_1 + (1-\alpha)\boldsymbol{\theta}_2)(I-M_0)
		= \alpha(1,\boldsymbol{\theta}_1)(I-M_0) + (1-\alpha)(1,\boldsymbol{\theta}_2)(I-M_0) \geq \mathbf{0}^T.
		\]
		Thus the solution set is convex. Moreover, the constraints use weak inequalities ($\geq$), defining an intersection of half-spaces; hence the boundary is included and the set is closed.
		
		\textbf{Boundedness (existence of upper and lower bounds):}
		Expand the constraint $\mathbf{c}^T(I-M_0) \geq \mathbf{0}^T$ column by column.
		
		For the constraint corresponding to column 1:
		\[
		(1-[M_0]_{11}) - \sum_{i=2}^n [M_0]_{i1}c_i \geq 0
		\quad \Rightarrow \quad
		\sum_{i=2}^n [M_0]_{i1}c_i \leq 1 - [M_0]_{11}.
		\]
		Since each term in the sum is positive, keeping only the $j$-th term preserves the inequality:
		\[
		[M_0]_{j1}c_j \leq \sum_{i=2}^n [M_0]_{i1}c_i \leq 1-[M_0]_{11}.
		\]
		By Lemma A.1, $1-[M_0]_{11} > 0$. Hence,
		\[
		c_j \leq \frac{1-[M_0]_{11}}{[M_0]_{j1}} =: c_{\max,j} < \infty.
		\]
		This establishes a finite upper bound in each dimension.
		
		For the constraint corresponding to column $j$ ($j \geq 2$):
		\[
		(1-[M_0]_{jj})c_j - [M_0]_{1j} - \sum_{i=2, i\neq j}^n [M_0]_{ij}c_i \geq 0.
		\]
		Dropping the nonnegative summation term yields
		\[
		(1-[M_0]_{jj})c_j \geq [M_0]_{1j} > 0.
		\]
		Using Lemma A.1 again gives
		\[
		c_j \geq \frac{[M_0]_{1j}}{1-[M_0]_{jj}} =: c_{\min,j} > 0.
		\]
		This establishes a strictly positive lower bound in each dimension.
		
		In sum, $\Theta^{val}$ has strictly positive lower bounds and finite upper bounds in all dimensions and includes its boundary; therefore it is a bounded closed convex set. The proposition is proved.
	\end{proof}
	
	\subsection{Proof of Theorem~\ref{thm:equilibrium}: The equal-exploitation point}
	
	\begin{proof}
		\textbf{Interior location:}
		By the proof of the existence theorem, taking $\mathbf{c} = \mathbf{y}^*$ gives
		$\mathbf{y}^{*T}(I-M_0) = (1-\lambda^*)\mathbf{y}^{*T} > \mathbf{0}^T$.
		Since the inequality is strict, the point lies in the interior of the value-feasible domain.
		
		\textbf{Equal exploitation rates:}
		Substitute $\mathbf{y}^*$ into the exploitation-rate formula for sector $j$:
		\[
		e_j(\mathbf{y}^*) = \frac{y_j^* - [\mathbf{y}^{*T}M_0]_j}{[\mathbf{y}^{*T}M_0]_j}.
		\]
		Because $\mathbf{y}^*$ is the left eigenvector associated with $\lambda^*$, we have
		$[\mathbf{y}^{*T}M_0]_j = \lambda^* y_j^*$.
		Substitute to simplify:
		\[
		e_j(\mathbf{y}^*) = \frac{y_j^* - \lambda^* y_j^*}{\lambda^* y_j^*}
		= \frac{1-\lambda^*}{\lambda^*} = e^* > 0.
		\]
		Since $\lambda^*$ is constant, the result is identical for all sectors, implying a uniform exploitation rate $e^*$.
		
		\textbf{Uniqueness:}
		Suppose there exists a reduction vector $\mathbf{c} > \mathbf{0}$ (with $c_1=1$) such that all sectors share the same exploitation rate $e$. Then necessarily,
		\[
		c_j = (1+e)[\mathbf{c}^T M_0]_j \quad \forall j.
		\]
		In vector form, $\mathbf{c}^T = (1+e)\mathbf{c}^T M_0$, i.e.,
		$\mathbf{c}^T M_0 = \frac{1}{1+e}\mathbf{c}^T$.
		Thus $\mathbf{c}^T$ must be a positive left eigenvector of $M_0$ with eigenvalue $\frac{1}{1+e}$.
		By matrix theory, an irreducible positive matrix has a unique positive eigenvector (up to scale), and it corresponds to the dominant eigenvalue $\lambda^*$.
		Therefore $\frac{1}{1+e} = \lambda^*$ (equivalently $e = e^*$), and the normalized reduction vector must be $\mathbf{y}^*$. The theorem is proved.
	\end{proof}
	
	\subsection{Proof of Proposition~\ref{prop:collapse}: Degeneration of the feasible domain}
	
	\begin{proof}
		In the zero-surplus case, $\lambda_{\max}(M_0) = 1$.
		Let $\mathbf{c} \in \Theta^{val}$ be any reduction vector satisfying the subsistence floor, i.e., $\mathbf{c}^T(I-M_0) \geq \mathbf{0}^T$.
		Since $\lambda_{\max}(M_0) = 1$, there exists a strictly positive right eigenvector $\mathbf{x}^* > \mathbf{0}$ such that $(I-M_0)\mathbf{x}^* = \mathbf{0}$.
		
		Consider the inner product:
		\[
		\mathbf{c}^T(I-M_0)\mathbf{x}^* = \mathbf{c}^T (\mathbf{0}) = 0.
		\]
		The inequality implies $\mathbf{v}^T := \mathbf{c}^T(I-M_0)$ is a nonnegative vector, while $\mathbf{x}^*$ is strictly positive. For the sum $\sum v_j x_j^* = 0$ to hold, every component of $\mathbf{v}^T$ must be zero. Hence,
		\[
		\mathbf{c}^T(I-M_0) = \mathbf{0}^T
		\quad \Rightarrow \quad
		\mathbf{c}^T M_0 = \mathbf{c}^T.
		\]
		Thus, in this boundary case, any $\mathbf{c}$ satisfying the subsistence constraint must be a left eigenvector of $M_0$ with eigenvalue 1.
		Since the positive eigenvector is unique up to normalization, the value-feasible domain collapses to a singleton $\mathbf{y}^*$, i.e., $\Theta^{val} = \{\mathbf{y}^*\}$.
		
		Substituting into the exploitation-rate expression:
		\[
		e_j = \frac{y_j^* - [\mathbf{y}^{*T}M_0]_j}{[\mathbf{y}^{*T}M_0]_j}
		= \frac{y_j^* - 1\cdot y_j^*}{1\cdot y_j^*} = 0.
		\]
		Hence all exploitation rates are zero: value created is fully used to reproduce labor power. The proposition is proved.
	\end{proof}
	
	\subsection{Proof of Proposition~\ref{prop:M_r_properties}: Monotonic properties of the parametric reproduction matrix}
	
	\begin{proof}
		\textbf{(i) Strict positivity and continuity:}
		By the definition of the maximum technical profit rate $r_A$, for any $r \in [0, r_A)$ the dominant eigenvalue of $\tilde{A} + rK$ is strictly less than 1. Then the inverse matrix in the price system admits the Neumann series:
		\[
		R(r) := [I - (\tilde{A} + rK)]^{-1} = \sum_{k=0}^{\infty} (\tilde{A} + rK)^k.
		\]
		Since $\tilde{A}$ is an irreducible nonnegative matrix and $K \geq 0$, the sum $(\tilde{A} + rK)$ is also irreducible and nonnegative. Hence its inverse $R(r)$ has strictly positive entries, i.e., $R(r)>0$. Because $L$ is a positive diagonal matrix and $B$ is nonnegative with no all-zero column, the product $M(r) = L R(r) B$ is strictly positive, i.e., $M(r) > 0$.
		Moreover, since the spectral-radius condition remains satisfied, matrix inversion is continuous in $r$, and thus all entries of $M(r)$ depend continuously on $r$.
		
		\textbf{(ii) Elementwise strict monotonicity:}
		Let $0 \leq r_1 < r_2 < r_A$. Then $(\tilde{A} + r_1K) \leq (\tilde{A} + r_2K)$.
		Since $K$ is nonnegative and not identically zero, the difference $(r_2 - r_1)K$ is a nonnegative nonzero matrix.
		Using the series expansion,
		\[
		R(r_2) - R(r_1) = \sum_{k=1}^{\infty} \left[ (\tilde{A} + r_2K)^k - (\tilde{A} + r_1K)^k \right].
		\]
		Given irreducibility, this difference is not only nonnegative but, through accumulation over the infinite series, becomes strictly positive elementwise. Hence $R(r_1) < R(r_2)$ (elementwise strict inequality).
		Since $L$ has strictly positive diagonal entries and $B$ has strictly positive relevant entries, pre- and post-multiplication preserves strict inequality, yielding $M(r_1) < M(r_2)$. Thus each element increases strictly with $r$.
		
		\textbf{(iii) Monotonicity and divergence of the dominant eigenvalue:}
		By a corollary of the Collatz--Wielandt theorem for nonnegative matrices, $M(r_1) < M(r_2)$ (with both strictly positive) implies
		$\lambda_{\max}(M(r_1)) < \lambda_{\max}(M(r_2))$; thus the dominant eigenvalue increases strictly with $r$.
		As $r \to r_A$, the dominant eigenvalue of $(\tilde{A} + rK)$ approaches 1, forcing the elements of $R(r)$ to diverge, and hence the elements of $M(r)$ diverge. Consequently, $\lambda_{\max}(M(r)) \to +\infty$. The proposition is proved.
	\end{proof}
	
	\subsection{Proof of Theorem~\ref{thm:r_star}: Existence of the maximum feasible profit rate}
	
	\begin{proof}
		Define $f: [0, r_A) \to \mathbb{R}$ by $f(r) := \lambda_{\max}(M(r))$.
		By Proposition~\ref{prop:M_r_properties}, $f(r)$ satisfies:
		(1) continuity on its domain;
		(2) strict monotonic increase;
		(3) boundary condition $f(0) = \lambda_{\max}(M(0)) < 1$ (by the assumption of a positive surplus);
		(4) divergence $\lim_{r \to r_A} f(r) = +\infty$.
		
		By the intermediate value theorem, there exists a unique $r^* \in (0, r_A)$ such that $f(r^*) = 1$, i.e., $\lambda_{\max}(M(r^*)) = 1$.
		Since $f(r)$ is strictly increasing, for any $r > r^*$ we have $\lambda_{\max}(M(r)) > 1$.
		By the existence theorem in the main text, the wage set compatible with physical reproduction becomes empty. Hence $r^*$ is the maximum profit rate a capitalist economy can sustain. The theorem is proved.
	\end{proof}
	
	\subsection{Proof of Theorem~\ref{thm:duality}: Duality between the profit rate and distributive sets}
	
	\begin{proof}
		\textbf{Part 1: Strict monotone contraction $\Theta^{price}(r_2) \subsetneq \Theta^{price}(r_1)$}
		
		Let $0 \leq r_1 < r_2 \leq r^*$. We first show $\Theta^{price}(r_2) \subseteq \Theta^{price}(r_1)$.
		Take any $\mathbf{w} \in \Theta^{price}(r_2)$, i.e., $\mathbf{w} > \mathbf{0}$ and $\mathbf{w}^T[I - M(r_2)] \geq \mathbf{0}^T$.
		By Proposition~\ref{prop:M_r_properties}, $M(r_1) < M(r_2)$ elementwise, hence $I - M(r_1) > I - M(r_2)$.
		Rewrite:
		\[
		\mathbf{w}^T[I - M(r_1)] = \mathbf{w}^T[I - M(r_2)] + \mathbf{w}^T[M(r_2) - M(r_1)].
		\]
		Since $\mathbf{w} > \mathbf{0}$ and $[M(r_2) - M(r_1)] > 0$, their product is a row vector with strictly positive components. Therefore,
		\[
		\mathbf{w}^T[I - M(r_1)] > \mathbf{w}^T[I - M(r_2)] \geq \mathbf{0}^T,
		\]
		implying $\mathbf{w} \in \Theta^{price}(r_1)$.
		
		Next, the inclusion is strict. Since $\Theta^{price}(r_1)$ has a nonempty interior, there exists a point $\mathbf{w}^*$ on its (relative) boundary such that, for some coordinate $j_0$, the constraint binds:
		$\bigl(\mathbf{w}^{*T}[I - M(r_1)]\bigr)_{j_0} = 0$.
		Consider the corresponding constraint under $r_2$:
		\[
		\bigl(\mathbf{w}^{*T}[I - M(r_2)]\bigr)_{j_0}
		= 0 - \sum_{i=1}^n w_i^* \bigl[M(r_2) - M(r_1)\bigr]_{ij_0} < 0.
		\]
		Thus $\mathbf{w}^* \notin \Theta^{price}(r_2)$, proving strict inclusion.
		
		\textbf{Part 2: Singleton degeneration}
		
		When $r = r^*$, $\lambda_{\max}(M(r^*)) = 1$.
		By symmetry with the zero-surplus boundary case analyzed in Proposition~\ref{prop:collapse}, any strictly positive vector satisfying
		$\mathbf{w}^T(I - M(r^*)) \geq \mathbf{0}^T$ must in fact satisfy equality and hence must be a left eigenvector of $M(r^*)$ associated with eigenvalue 1.
		Since the positive eigenvector is unique after normalization, $\Theta^{price}(r^*)$ degenerates to a singleton. The theorem is proved.
	\end{proof}
	
	\subsection{Proof of Theorem~\ref{thm:two_equalities}: An algebraic condition for the second equality}
	
	\begin{proof}
		The second equality requires total profit to equal total surplus value, i.e., $\Pi(r) = S(\mathbf{c})$.
		
		As stated in the main text, total profit can be written as a fraction of total price:
		\[
		\Pi(r) = r \mathbf{p}^T K \mathbf{x}
		= \kappa \cdot r \mathbf{w}^{rel,T} L [I - \tilde{A} - rK]^{-1} K \mathbf{x}.
		\]
		Substitute $\kappa = \frac{\mathbf{c}^T L(I-\tilde{A})^{-1}\mathbf{x}}{\mathbf{w}^{rel,T} L [I-\tilde{A}-rK]^{-1}\mathbf{x}}$ and use the definition of $\gamma(r)$:
		\[
		\Pi(r) = \left[ \mathbf{c}^T L(I-\tilde{A})^{-1}\mathbf{x} \right] \cdot \gamma(r).
		\]
		
		Meanwhile, total surplus value in the pure value space expands as:
		\begin{align*}
			S(\mathbf{c})
			&= \mathbf{c}^T L\mathbf{x} - \mathbf{c}^T M_0 L\mathbf{x} \\
			&= \mathbf{c}^T L(I-\tilde{A})^{-1}(I-\tilde{A})\mathbf{x}
			- \mathbf{c}^T L(I-\tilde{A})^{-1}BL\mathbf{x} \\
			&= \mathbf{c}^T L(I-\tilde{A})^{-1}[(I-\tilde{A}-BL)\mathbf{x}].
		\end{align*}
		
		Equate the two expressions:
		\[
		\mathbf{c}^T L(I-\tilde{A})^{-1}\mathbf{x} \cdot \gamma(r)
		= \mathbf{c}^T L(I-\tilde{A})^{-1} \left[ (I-\tilde{A}-BL)\mathbf{x} \right].
		\]
		Move the left-hand term to the right and factor out $\mathbf{c}^T L(I-\tilde{A})^{-1}$:
		\[
		\mathbf{c}^T L(I-\tilde{A})^{-1}
		\Big[ (I-\tilde{A}-BL)\mathbf{x} - \gamma(r)\mathbf{x} \Big] = 0.
		\]
		Rearrange the bracket:
		\[
		(I-\tilde{A}-BL)\mathbf{x} - \gamma(r)\mathbf{x}
		= \mathbf{x} - (\tilde{A}+BL)\mathbf{x} - \gamma(r)\mathbf{x}
		= (1-\gamma(r))\mathbf{x} - (\tilde{A}+BL)\mathbf{x}.
		\]
		For interpretability, extract a minus sign (which does not affect the equality to zero):
		\[
		\mathbf{c}^T L(I-\tilde{A})^{-1}
		\Big[ (\tilde{A}+BL)\mathbf{x} - (1-\gamma(r))\mathbf{x} \Big] = 0.
		\]
		Define $\boldsymbol{\eta}_{new}(r) := L(I-\tilde{A})^{-1}
		\Big[ (\tilde{A}+BL)\mathbf{x} - (1-\gamma(r))\mathbf{x} \Big]$.
		Then $\mathbf{c}^T \boldsymbol{\eta}_{new}(r) = 0$, completing the proof.
	\end{proof}
	
	\subsection{Proof of Theorem~\ref{thm:universal_compatibility}: Universal compatibility and a matrix identity}
	
	\begin{proof}
		To show that the hyperplane $\mathbf{c}^T\boldsymbol{\eta}_{new}(r)=0$ must intersect the interior of $\Theta^{val}$, we show that the intersection lies within the polyhedral cone induced by the constraint $\mathbf{c}^T(I-M_0) \ge \mathbf{0}^T$.
		
		\textbf{Step 1: An algebraic mapping of the value-feasible domain} \\
		Define a nonnegative vector $\mathbf{q} \geq \mathbf{0}$ by $\mathbf{q}^T = \mathbf{c}^T (I - M_0)$.
		Since the system has a surplus, $\lambda_{\max}(M_0) < 1$, so $(I-M_0)^{-1}$ exists and is strictly positive (or at least nonnegative). Hence any feasible reduction vector can be parameterized as:
		\[
		\mathbf{c}^T = \mathbf{q}^T (I - M_0)^{-1}, \quad \mathbf{q} \geq \mathbf{0}.
		\]
		Substitute into the hyperplane condition; transformation compatibility is equivalent to finding $\mathbf{q}\ge 0$ such that:
		\[
		\mathbf{q}^T (I - M_0)^{-1} \boldsymbol{\eta}_{new}(r) = 0.
		\]
		
		\textbf{Step 2: A composite matrix identity} \\
		Expanding $\boldsymbol{\eta}_{new}(r)$, the core operator becomes $(I - M_0)^{-1} L(I-\tilde{A})^{-1}$.
		Let $S = L(I-\tilde{A})^{-1}$, so that $M_0 = SB$. We claim and prove:
		\[
		(I - SB)^{-1} S = L (I - \tilde{A} - BL)^{-1} := \hat{S}.
		\]
		\textit{Proof:} Right-multiply the left-hand side by $(I - \tilde{A} - BL)$:
		\begin{align*}
			(I - SB)^{-1} S (I - \tilde{A} - BL)
			&= (I - SB)^{-1} [S(I-\tilde{A}) - SBL] \\
			&= (I - SB)^{-1} [L(I-\tilde{A})^{-1}(I-\tilde{A}) - SBL] \\
			&= (I - SB)^{-1} [L - SBL] \\
			&= (I - SB)^{-1} (I - SB) L = L.
		\end{align*}
		By uniqueness of the inverse, the identity holds.
		
		\textbf{Step 3: Pinning down the compatibility interval} \\
		Using the identity, the intersection problem in $\mathbf{c}$-space simplifies to an equation in $\mathbf{q}$-space:
		\[
		\mathbf{q}^T \left( \hat{S}\hat{A}\mathbf{x} - (1-\gamma(r))\hat{S}\mathbf{x} \right) = 0,
		\]
		where $\hat{S} = L(I-\tilde{A}-BL)^{-1} > 0$ and $\hat{A} = \tilde{A}+BL$.
		Let the bracketed vector be $\tilde{\boldsymbol{\eta}}(r)$.
		By the Collatz--Wielandt extremal characterization: for any positive vector
		$\mathbf{y} = \hat{S}\mathbf{x} > \mathbf{0}$, the dominant eigenvalue $\lambda_{\max}(\hat{A})$ lies between the minimum and maximum of the componentwise ratios
		$[\hat{S}\hat{A}\mathbf{x}]_i / [\hat{S}\mathbf{x}]_i$.
		
		By definition,
		$\gamma_i^{crit} = 1 - \frac{[\hat{S}\hat{A}\mathbf{x}]_i}{[\hat{S}\mathbf{x}]_i}$.
		If the realized profit share $\gamma(r)$ lies between $\gamma_{\min}^{crit}$ and $\gamma_{\max}^{crit}$, then the scalar $(1-\gamma(r))$ lies between the minimum and maximum of the ratios. This implies that $\tilde{\boldsymbol{\eta}}(r)$ must have \textbf{mixed signs} (both positive and negative components).
		
		\textbf{Step 4: Existence of an interior intersection} \\
		Since $\tilde{\boldsymbol{\eta}}(r)$ has mixed signs, there exists a vector $\mathbf{q}^* > \mathbf{0}$ such that $\mathbf{q}^{*T} \tilde{\boldsymbol{\eta}}(r) = 0$.
		Map back to reduction coefficients via $\mathbf{c}^* = (I - M_0)^{-T} \mathbf{q}^*$.
		Because $\mathbf{q}^* > \mathbf{0}$ and the inverse is nonnegative, $\mathbf{c}^*$ satisfies the strict subsistence-floor condition $\mathbf{c}^{*T}(I-M_0) > \mathbf{0}^T$.
		Therefore, whenever the profit share lies in the compatibility interval, the equality hyperplane passes through the interior of $\Theta^{val}$. The theorem is proved.
	\end{proof}
	
	\section{A Transformation Proof Without an Explicit Dimensional Multiplier}
	
	In the main-text solution, we fixed sector 1 as the unit benchmark ($c_1 = 1$) to pin down the relative scale of reduction coefficients, and introduced the scalar multiplier $\kappa$ (MELT) to align macro aggregates dimensionally. This appendix investigates a more demanding theoretical claim: if we abandon the external multiplier $\kappa$ and reject any人为 dimensional alignment, can the law of value still be compatible---within the algebraic system itself---with the two macro equalities in the nominal price-of-production system?

	Once $\kappa$ is dropped, values (labor time) must be made numerically equal to prices of production (legal tender) directly.
	
	In the price space, given a stock-based profit rate $r$ and sectoral nominal wages $\mathbf{w}$, capitalists generate an objectively existing vector of prices of production $\mathbf{p}^*$. Under this price system, two macro constants exist:
	1. Total social price: $P^* = \mathbf{p}^{*T} \mathbf{x}$; \\
	2. Total stock-based profit: $\Pi^* = r \mathbf{p}^{*T} K \mathbf{x}$. \\
	The realized macro profit share is then the constant $\gamma^* = \Pi^* / P^*$.
	
	In the pure value space, to ``absorb'' legal tender directly, we must relax the restriction $c_1 = 1$. The reduction vector $\mathbf{c} = (c_1, c_2, \ldots, c_n)^T > \mathbf{0}$ now regains $n$ degrees of freedom and must satisfy the objective physical subsistence floor:
	$\mathbf{c}^T (I - M_0) \geq \mathbf{0}^T$.

	Without $\kappa$, for the two equalities to hold exactly, the vector $\mathbf{c}$ must satisfy simultaneously:
	\begin{align}
		\text{First equality:} \quad & \mathbf{c}^T L(I-\tilde{A})^{-1} \mathbf{x} = P^* \label{eq:app_D_eq1} \\
		\text{Second equality:} \quad & \mathbf{c}^T L(I-\tilde{A})^{-1}(I-\tilde{A}-BL)\mathbf{x} = \Pi^* \label{eq:app_D_eq2}
	\end{align}
	Thus, in an $n$-dimensional variable space, the two macro equalities define two non-parallel affine hyperplanes. Solving the transformation problem is equivalent to proving that the intersection line of these two hyperplanes must pass through the interior of the value-feasible domain $\Theta^{val}$.
	
	To prove this, define $\mathbf{q}^T = \mathbf{c}^T(I-M_0)$. The reproduction floor implies $\mathbf{q} \geq \mathbf{0}$.
	Since the system has a physical surplus ($\lambda_{\max}(M_0) < 1$), the invertible transformation
	$\mathbf{c}^T = \mathbf{q}^T (I-M_0)^{-1}$ holds.
	
	Substitute this into the two equalities, and use the matrix identity proved above,
	$(I-M_0)^{-1} L(I-\tilde{A})^{-1} = L(I-\tilde{A}-BL)^{-1} = \hat{S}$, to obtain:
	
	\textbf{Substitute into the first equality \eqref{eq:app_D_eq1}:}
	\[
	\mathbf{q}^T [ (I-M_0)^{-1} L(I-\tilde{A})^{-1} ] \mathbf{x} = P^*
	\quad \Longrightarrow \quad
	\mathbf{q}^T (\hat{S} \mathbf{x}) = P^*.
	\]
	
	\textbf{Substitute into the second equality \eqref{eq:app_D_eq2}:}
	\[
	\mathbf{q}^T [ (I-M_0)^{-1} L(I-\tilde{A})^{-1} ] (I-\tilde{A}-BL)\mathbf{x} = \Pi^*.
	\]
	Replace the bracketed term by $\hat{S}$:
	\[
	\mathbf{q}^T \hat{S} (I-\tilde{A}-BL)\mathbf{x} = \Pi^*.
	\]
	By the definition of $\hat{S}$, it is the inverse operator of $(I-\tilde{A}-BL)$ in the relevant composition, yielding:
	\[
	\mathbf{q}^T (L \mathbf{x}) = \Pi^*.
	\]

	After the transformation, the two macro equalities become a purely linear system in the nonnegative mapping space ($\mathbf{q} \geq \mathbf{0}$). In matrix form:
	\begin{equation}
		\begin{bmatrix} \mathbf{v}_1^T \\ \mathbf{v}_2^T \end{bmatrix} \mathbf{q}
		= \begin{bmatrix} P^* \\ \Pi^* \end{bmatrix},
	\end{equation}
	where $\mathbf{v}_1 = \hat{S} \mathbf{x} > \mathbf{0}$ and $\mathbf{v}_2 = L \mathbf{x} > \mathbf{0}$ are known strictly positive vectors.
	
	Whether this nonnegative linear system admits a strictly positive solution $\mathbf{q} > \mathbf{0}$ is essentially a generation problem for a \emph{polyhedral convex cone}. By \textbf{Farkas' lemma} (and its convex-cone equivalent form), the system $V^T \mathbf{q} = \mathbf{b}$ admits a strictly positive solution $\mathbf{q} > \mathbf{0}$ if and only if the target vector $\mathbf{b}$ lies strictly in the interior of the convex cone generated by the columns of $V^T$ (i.e., the vectors $(\mathbf{v}_{1i}, \mathbf{v}_{2i})^T$).
	
	In the two-dimensional plane (spanned by $P^*$ and $\Pi^*$), this is equivalent to requiring that the ratio $\Pi^* / P^*$ lies strictly between the minimum and maximum slopes of the generating rays, i.e., between
	$\min_i \frac{[\mathbf{v}_2]_i}{[\mathbf{v}_1]_i}$ and $\max_i \frac{[\mathbf{v}_2]_i}{[\mathbf{v}_1]_i}$.
	Provided $n \geq 3$ and there is no degenerate collinearity, the interior of this cone is a broad wedge region, and the two hyperplanes necessarily have a large $(n-2)$-dimensional intersection within the positive orthant (hence infinitely many solutions $\mathbf{q}^* > \mathbf{0}$).
	
	Consider this ratio interval. Expanding components:
	\[
	\text{ratio}_i = \frac{[L\mathbf{x}]_i}{[\hat{S}\mathbf{x}]_i}.
	\]
	Using the operator expansion $\hat{S}\mathbf{x} = L\mathbf{x} + L(\tilde{A}+BL)\hat{S}\mathbf{x}$, substitute and simplify:
	\[
	\text{ratio}_i
	= 1 - \frac{[L(\tilde{A}+BL)\hat{S}\mathbf{x}]_i}{[\hat{S}\mathbf{x}]_i}
	= 1 - \frac{[\hat{S}\hat{A}\mathbf{x}]_i}{[\hat{S}\mathbf{x}]_i}.
	\]
	This coincides with the ``critical profit shares'' $\gamma_i^{crit}$ defined in Definition~\ref{def:critical_profit_rate}.
	
	Therefore, if the realized profit share $\gamma^* = \Pi^* / P^*$ lies in the compatibility interval
	$[\gamma_{\min}^{crit}, \gamma_{\max}^{crit}]$, then there exist infinitely many $\mathbf{q}^*$ in the strictly nonnegative mapping space such that both equalities hold. Mapping back via
	$\mathbf{c}^* = (I-M_0)^{-T} \mathbf{q}^*$ yields a reduction vector that (i) numerically absorbs monetary units without any external multiplier, (ii) satisfies Marx's two macro equalities, and (iii) strictly guarantees the physical reproduction floor for the working class.
	
	Having established that the two hyperplanes must intersect even without an external multiplier, we now prove that the absolute reduction vector $\mathbf{c}^*$ obtained in the ``no-multiplier'' absolute system is proportional to the relative vector $\mathbf{c}_{rel}$ used in the main text (normalized by $c_{rel,1}=1$).
	
	\begin{proof}
		Fix a compatible profit rate $r$. Suppose there exists a relative reduction vector $\mathbf{c}_{rel}$ satisfying the transformation intersection condition in the relative system (with $c_{rel,1} = 1$).
		By the compatibility condition in the main text, $\mathbf{c}_{rel}$ must satisfy:
		\begin{equation}\label{eq:proof_ratio}
			\frac{S(\mathbf{c}_{rel})}{F(\mathbf{c}_{rel})}
			= \frac{\mathbf{c}_{rel}^T \hat{S} (I-\tilde{A}-BL)\mathbf{x}}{\mathbf{c}_{rel}^T \hat{S} \mathbf{x}}
			= \gamma^* = \frac{\Pi^*}{P^*},
		\end{equation}
		where $P^*$ is the empirically observed total price, $\Pi^*$ is total stock-based profit, and $\hat{S}$ is the surplus operator.
		
		Define a macro scalar $\mu$, interpreted as the ratio between ``actual total price'' and ``relative total labor value'' (i.e., MELT $\kappa$):
		\begin{equation}
			\mu = \frac{P^*}{F(\mathbf{c}_{rel})} = \frac{P^*}{\mathbf{c}_{rel}^T \hat{S} \mathbf{x}}.
		\end{equation}
		
		Now construct a test vector in the absolute system:
		$\hat{\mathbf{c}} = \mu \cdot \mathbf{c}_{rel}$.
		We verify that it satisfies both macro equalities.
		
		\textbf{Check the first equality:}
		\begin{align*}
			\hat{\mathbf{c}}^T \hat{S} \mathbf{x}
			&= (\mu \cdot \mathbf{c}_{rel})^T \hat{S} \mathbf{x}
			= \mu \left( \mathbf{c}_{rel}^T \hat{S} \mathbf{x} \right).
		\end{align*}
		Substitute the definition of $\mu$:
		\[
		= \left( \frac{P^*}{\mathbf{c}_{rel}^T \hat{S} \mathbf{x}} \right)
		\left( \mathbf{c}_{rel}^T \hat{S} \mathbf{x} \right) = P^*.
		\]
		Thus the first equality holds exactly.
		
		\textbf{Check the second equality:}
		\begin{align*}
			\hat{\mathbf{c}}^T \hat{S} (I-\tilde{A}-BL)\mathbf{x}
			&= \mu \left( \mathbf{c}_{rel}^T \hat{S} (I-\tilde{A}-BL)\mathbf{x} \right).
		\end{align*}
		By \eqref{eq:proof_ratio},
		$\mathbf{c}_{rel}^T \hat{S} (I-\tilde{A}-BL)\mathbf{x}
		= \frac{\Pi^*}{P^*} (\mathbf{c}_{rel}^T \hat{S} \mathbf{x})$.
		Hence,
		\[
		= \mu \left[ \frac{\Pi^*}{P^*} \left( \mathbf{c}_{rel}^T \hat{S} \mathbf{x} \right) \right].
		\]
		Substitute $\mu$ again:
		\[
		= \left( \frac{P^*}{\mathbf{c}_{rel}^T \hat{S} \mathbf{x}} \right)
		\left[ \frac{\Pi^*}{P^*} \left( \mathbf{c}_{rel}^T \hat{S} \mathbf{x} \right) \right]
		= \Pi^*.
		\]
		Thus the second equality holds exactly.
		
		\textbf{Check the physical reproduction floor (feasibility constraint):}
		Since $\mathbf{c}_{rel} \in \Theta^{val}$, we have $\mathbf{c}_{rel}^T (I-M_0) \geq \mathbf{0}^T$.
		Also $\mu>0$ because both total price and total value are positive. By positive scaling of linear inequalities:
		\[
		(\mu \mathbf{c}_{rel})^T (I-M_0) \geq \mathbf{0}^T
		\quad \Longrightarrow \quad
		\hat{\mathbf{c}}^T (I-M_0) \geq \mathbf{0}^T.
		\]
		Hence $\hat{\mathbf{c}}$ remains feasible.
		
		\textbf{Conclusion:}
		The constructed vector $\hat{\mathbf{c}}$ satisfies all transformation conditions in the absolute system without any external variables. Since the intersection of the two hyperplanes defines the solution affine subspace, the solution vector $\mathbf{c}^*$ must coincide with $\hat{\mathbf{c}}$. Therefore,
		\begin{equation}
			\mathbf{c}^* = \mu \cdot \mathbf{c}_{rel}.
		\end{equation}
		Thus, even in the no-multiplier absolute system, the solution $\mathbf{c}^*$ preserves the relative distributive structure encoded by $\mathbf{c}_{rel}$; the extra scalar $\mu$ endogenously absorbs the dimensional conversion from labor time into legal tender.
	\end{proof}
	\section{Parameterization and Algebraic Details for the Numerical Experiment}
	The example describes an economy with fixed capital. The circulating-capital input matrix $A$ (current intermediate-input use) and the fixed-capital stock matrix $K$ (advanced equipment stocks) are:
	\[
	A = \begin{bmatrix}
		0.15 & 0.18 & 0.12 \\
		0.20 & 0.12 & 0.15 \\
		0.10 & 0.15 & 0.18
	\end{bmatrix}, \quad
	K = \begin{bmatrix}
		0.40 & 0.35 & 0.30 \\
		0.30 & 0.45 & 0.25 \\
		0.25 & 0.30 & 0.40
	\end{bmatrix}.
	\]
	The depreciation-rate vector is $\boldsymbol{\delta} = (0.10, 0.12, 0.08)^T$, generating the depreciation matrix
	$D = K \cdot \mathrm{diag}(\boldsymbol{\delta})$.
	The composite material input matrix is:
	\[
	\tilde{A} = A + D = \begin{bmatrix}
		0.190 & 0.222 & 0.144 \\
		0.230 & 0.174 & 0.170 \\
		0.125 & 0.186 & 0.212
	\end{bmatrix}.
	\]
	Let $L = \mathrm{diag}(0.40, 0.60, 0.35)$ be the direct-labor diagonal matrix. The heterogeneous consumption matrix $B$ (column $j$ is the physical consumption basket of workers in sector $j$) and the gross output vector $\mathbf{x}$ are:
	\[
	B = \begin{bmatrix}
		0.25 & 0.30 & 0.20 \\
		0.20 & 0.25 & 0.15 \\
		0.22 & 0.28 & 0.25
	\end{bmatrix}, \quad
	\mathbf{x} = \begin{bmatrix}
		100 \\ 80 \\ 120
	\end{bmatrix}.
	\]
	
	The computed dominant eigenvalues are:
	\begin{itemize}
		\item Composite material input matrix: $\lambda_{\max}(\tilde{A}) = 0.5519 < 1$ (Hawkins--Simon satisfied);
		\item Basic labor-reproduction matrix: $\lambda_{\max}(M_0) = 0.7184 < 1$ (nonempty value-feasible domain);
		\item Extended physical matrix: $\lambda_{\max}(\tilde{A} + BL) = 0.8749 < 1$ (physical surplus exists).
	\end{itemize}
	The implied profit-rate bounds are: maximum technical profit rate $r_A = 44.77\%$ and maximum feasible profit rate $r^* = 12.48\%$.
	The critical profit-share interval ensuring transformation compatibility is
	$[\gamma_{\min}^{crit}, \gamma_{\max}^{crit}] = [10.69\%, 15.12\%]$.
	
	At the profit rate $r = 11.55\%$ (corresponding to a profit share $\gamma = 11.71\%$), the intersection point obtained via constrained optimization is
	$\mathbf{c}^* = (1.0000, 1.7868, 1.0902)^T$. Substituting this solution into the model yields the following verification results:
	\begin{itemize}
		\item Total labor value: $F(\mathbf{c}^*) = 401.5912$
		\item Total surplus value: $S(\mathbf{c}^*) = 47.0379$
		\item Dimensional conversion multiplier: $\kappa = 0.9386$
		\item Total price: $P(\mathbf{w}) = 401.5912$
		\item Total profit: $\Pi = r \cdot \mathbf{p}^T K \mathbf{x} = 47.0379$
		\item Error in the first equality: $|P - F| = 0$
		\item Error in the second equality: $|\Pi - S| = 3.54 \times 10^{-9}$
	\end{itemize}
	The sectoral exploitation rates are $e_1 = 9.23\%$, $e_2 = 57.42\%$, and $e_3 = 37.04\%$, exhibiting pronounced cross-sector heterogeneity. This reflects structural differences across labor types between value creation and the reproduction costs of labor power.
\end{document}